\documentclass[a4paper,UKenglish,cleveref, autoref, thm-restate]{lipics-v2021}
\nolinenumbers
\hideLIPIcs

\Crefname{section}{Sec.}{Secs.}
\Crefname{figure}{Fig.}{Figs.}

\usepackage[dvipsnames]{xcolor}

\definecolor{darkblue}{rgb}{0.0, 0.0, 0.55}
\definecolor{Darkgreen}{rgb}{0., 0.35, 0.}
\definecolor{darkgreen}{rgb}{0.0, 0.55, 0.13}
\definecolor{darkred}{rgb}{0.55, 0.0, 0.0}
\definecolor{darkviolet}{rgb}{0.58, 0.0, 0.83}
\definecolor{lightblue}{rgb}{0.68, 0.85, 0.9}
\usepackage{lstcoq}

\usepackage{stmaryrd}
\usepackage{mathtools}
\usepackage{macros}
\usepackage{tikz-cd}
\usetikzlibrary{decorations.pathmorphing}

\tikzcdset{
  squiggly/.code={%
      \pgfkeysalso{
        /tikz/decorate,
        /tikz/decoration={
          snake,
          segment length=15\pgflinewidth,
          amplitude=1.9\pgflinewidth,
          post=lineto, post length=6\pgflinewidth,
          pre=lineto, pre length=6\pgflinewidth,
          #1}}},
  arrow style=tikz,
  diagrams={>={Straight Barb[scale=0.8]}}
}

\title{Bidirectional Interpolation for the \(\lambda\)-Calculus}
\subtitle{Revisiting and Formalising Craig-\v Cubri\' c Interpolation}
\titlerunning{Bidirectional Interpolation for the \(\lambda\)-Calculus}

\author{Meven {Lennon-Bertrand}}{Université Paris Cité, INRIA, CNRS, IRIF, F-75013 Paris, France}{meven.bertrand@inria.fr}{https://orcid.org/0000-0002-7079-8826}{}

\author{Alexis Saurin}{Université Paris Cité, CNRS, INRIA, IRIF, F-75013 Paris, France}{Alexis.Saurin@irif.fr}{https://orcid.org/0009-0002-1304-5518}{}

\authorrunning{M. Lennon-Bertrand and A. Saurin}

\Copyright{Meven Lennon-Bertrand and Alexis Saurin}

\ccsdesc[500]{Theory of computation~Proof theory}
\ccsdesc[300]{Theory of computation~Type theory}
\ccsdesc[300]{Theory of computation~Type structures}

\keywords{Craig Interpolation, Bidirectional Typing, Typed Lambda Calculus}

\relatedversion{}\relatedversiondetails[linktext={10.4230/LIPIcs.ITP.2026.30}]{Published Version}{https://doi.org/10.4230/LIPIcs.ITP.2026.30}
\relatedversiondetails[linktext={hal.science/hal-05526634}]{Extended Version}{https://hal.science/hal-05526634}

\supplement{Source code}%optional, e.g. related research data, source code, ... hosted on a repository like zenodo, figshare, GitHub, ...
\supplementdetails[linktext={10.5281/zenodo.20308175}]{Zenodo archive}{https://doi.org/10.5281/zenodo.20308175} %linktext, cite, and subcategory are optional
\supplementdetails[linktext={github.com/MevenBertrand/interpolation}]{Github repository}{https://github.com/MevenBertrand/interpolation} %linktext, cite, and subcategory are optional

\EventEditors{Ekaterina Komendantskaya and Tobias Nipkow}
\EventNoEds{2}
\EventLongTitle{17th International Conference on Interactive Theorem Proving (ITP 2026)}
\EventShortTitle{ITP 2026}
\EventAcronym{ITP}
\EventYear{2026}
\EventDate{July 26--29, 2026}
\EventLocation{Lisbon, Portugal}
\EventLogo{}
\SeriesVolume{382}
\ArticleNo{34}
\begin{document}

\maketitle

%TODO mandatory: add short abstract of the document
\begin{abstract}
Craig's Interpolation theorem has a wide range of applications, from mathematical logic to computer science. Proof-theoretic techniques for establishing interpolation usually follow a method first introduced by Maehara for the Sequent Calculus and then adapted by Prawitz to Natural Deduction. 
The result can be strengthened to a proof-relevant version, taking proof terms into account: this was first established by \v Cubri\'c in the simply-typed lambda-calculus with sums and more recently in linear, classical and intuitionistic sequent calculi.
We give a new proof of \v Cubri\' c's proof-relevant interpolation theorem by building on principles of bidirectional typing, and formalise
it in \Rocq.
\end{abstract}

\section{Introduction}
\label{sec:intro}

\subparagraph*{Craig-Lyndon Interpolation}

Craig's interpolation theorem~\cite{Craig57,Craig57a} expresses a desirable, though unsurprising, property of logical entailment: a logical entailment of $B$ from $A$ can be mediated through a logical statement that only uses the \emph{common concepts} of $A$ and $B$. More precisely, a \emph{Craig Interpolant} $I$ of $A \vdash B$ is a logical formula that is built using only the predicate symbols occurring both in $A$ and $B$ and such that $A \vdash I$ and $I \vdash B$.
This was soon recognized as a central result of first-order logic~\cite{Feferman08,vanBenthem08} with application in both mathematical logic~\cite{Robinson56,Beth1953padoa,Feferman08,Vaananen08} (notably an elegant proof of Beth's definability theorem and of Robinson's joint consistency theorem) and computer science~\cite{Lavalette08,Troelstra_Schwichtenberg_2000,McMillan18} (with applications in database query planning or modularisation of model-checking), and numerous generalizations to extensions or restrictions of classical logic~\cite{Vaananen08,Schutte,Prawitz1965-PRANDA}.

Proof-theoretic methods to establish Craig's theorem essentially rely on
cut admissibility in the sequent calculus and the resulting subformula property~\cite{Maehara,Prawitz1965-PRANDA}. Indeed, this subformula property -- a property that holds for cut-free derivations, asserting that every formula appearing in the derivation is a subformula of some formula in the conclusion judgement -- allows to control the language of the interpolant. 
This provides well-structured proof methods~\cite{Maehara,Prawitz1965-PRANDA,Takeuti2013proof,Schutte,Troelstra_Schwichtenberg_2000},
and plays an important role when generalising interpolation to various logics,
as well as in finer-grained notions of interpolations, most notably Lyndon interpolation~\cite{Lyndon1959interpolation}, which is sensitive to polarity, and uniform interpolation~\cite{Pitts92},
in which a single interpolant is generated for a whole sub-language.

%\begin{itemize}
%  \item Central result in FOL \cite{Craig57,Craig57a,Lyndon1959interpolation,Maehara,Prawitz1965-PRANDA,Feferman08}
%  \item Applications in Logic \cite{Robinson56,Beth1953padoa,Feferman08,Vaananen08,vanBenthem08} and in CS \cite{Lavalette08,Troelstra_Schwichtenberg_2000,McMillan18}
%\item Factorising logical consequence
%\item Proof-theoretic methods \cite{Maehara,Prawitz1965-PRANDA,Takeuti2013proof,Schutte,Troelstra_Schwichtenberg_2000}
%\end{itemize}

\subparagraph*{Proof-relevant Interpolation}
Considering the importance of proof-theoretic methods for interpolation, it is not so surprising that interpolation theorems can be given a proof-relevant refinement, as recently advocated by Saurin~\cite{Saurin25fscd}. This means that not only can one factor the entailment through the interpolant formula, but also that
the proofs from and to the interpolant, once composed, yield the original proof.
Even more interesting is that proof-relevant interpolation was implicit already
in Maehara's original results in the 60s.

This is exactly what is at stake in \v Cubri\'c's interpolation theorem~\cite{Cubricphd,Cubric1994} for the simply typed \(\lambda\)-calculus with sums and empty types (\stlcp), a proof-relevant refinement of Prawitz's proof of interpolation~\cite{Prawitz1965-PRANDA} for natural deduction. 
This theorem essentially states that for every types $A,B$ and every normal term $t$
such that $\typing{x: A}{t}[B]$, there is a type $C$, built from the atomic types occurring both in $A$ and $B$, and terms $\typing{x: A}{u}[C]$ and $\typing{y: C}{v}[B]$ such that 
$\red{(\lambda y. v)~u}{t}$.
Owing to the fact that \stlcp is the internal language
of bi-cartesian closed categories (biCCC), \v Cubri\'c moreover
leverages his interpolation result to obtain categorical corollaries,
including categorifications of Pitt's results on Heyting algebras, the
posetal counterpart to biCCCs.

\subparagraph*{Interpolation meets bidirectional typing}
Although very interesting, \Cubric's original proof is… not exactly pretty.
The core argument goes by an induction on the size of terms, gradually replacing
subterms by variables. Moreover, the one case dismissed in the proof as
``the same as the previous'' \cite[Subcase 6.3.2]{Cubric1994}, is in fact not
at all similar.
However, we can discern more structure in the proof, as it essentially follows
a characterisation of normal forms in \stlcp going back to
Prawitz. We have made progress since Prawitz,
and it is nowadays routine \cite{Abel2013,Fiore2022}
to re-cast this characterisation as an explicit,
inductive description of normal forms, mutually with neutrals,
those normal forms consisting of stuck eliminations.
Our new proof thus goes by a direct, mutual induction
on this inductive characterisation. 

A difficulty is that, in the presence of positive type formers
(sums and the empty type), \(\beta\)-normal forms are not enough: interpolation
works well for terms/proofs satisfying the subformula property, which in
\stlcp mere \(\beta\)-normal forms do not have. We hence need to go further and
normalise with respect to \emph{commuting conversions} \cite{Girard1989}, which
exchange certain eliminations to recover a class of normal forms with a
good subformula property.

Surprisingly, our inductive normal forms are
naturally interpreted in terms of bidirectional typing \cite{Pierce2000,Dunfield2021}.
Although this connection has been drawn before, it is
often believed that it breaks for sums. We think this is
a confusion stemming from the fact that \stlcp has two notions of normal forms.
The first is characterised as having the subformula property, which
enables our interpolation proof. But, contrarily to the situation without sums,
these do not give unique representative of equivalence
classes for the equational theory of biCCCs, for which one needs
a second, stronger notion of normal forms \cite{Scherer2017}. Since bidirectional typing
has all to do with information flow, it naturally coincides with the
\emph{first} kind of normal forms, although it does not correspond with the second
any more. We believe this relation between bidirectionalism and
the subformula property deserves more publicity, which we hope this paper can
contribute towards.

\subparagraph*{Contributions}
We develop and formalise, in the \Rocq proof assistant,
the meta-theory of the \(\lambda\)-calculus with sums,
including the first formalised proof of normalisation
for a calculus with commuting conversions; we relate normal
forms to bidirectional typing, and
use this in a new proof of proof-relevant interpolation.

\subparagraph*{Organisation}
The paper is organised as follows.
In \Cref{sec:stlc-sums} we recall the definition and basic properties of the simply typed  \(\lambda\)-calculus with sums, with a focus on commuting conversions.
\Cref{sec:bidirectional} introduces a bidirectional type system and shows that it characterises the normal forms of \stlcp; on the way, we establish normalisation of \stlcp to strengthen the interpolation result that we establish in \Cref{sec:interpolation}, where we prove a proof-relevant Lyndon-style interpolation theorem, first for the constant-free calculus
then extended to handle constants and equations.
We discuss formalisation choices in \Cref{sec:formalisation} and related work in 
\Cref{sec:related}, and conclude in \Cref{sec:conclusion}.

\subparagraph*{Notations} We work in dependent type theory, but
to avoid confusion between judgments of the ambiant logic and those of
the object language,
we denote meta-level typing as \(\mty\) and the meta-level universes
of propositions as \(\Prop\).
Colours are used to suggest information:
a term being \textcolor{ccol}{normal} or \textcolor{icol}{neutral},
or an object containing \textcolor{scol}{source},
\textcolor{tcol}{target} or \textcolor{mcol}{shared} vocabulary.
No information is conveyed by colour alone, it is only used as an extra visual
cue. Accordingly, the same name with different colours denotes
\emph{the same object} (with updated knowledge).
% For instance, when we learn that a source type
% only contains target vocabulary, it is now a valid interpolant
% and thus changes colour.
Throughout the text, \formfile[]{}[] indicates links to the code.

\section{The simply-typed \(\lambda\)-calculus with sums}
\label{sec:stlc-sums}

\subsection{Syntax}

\subparagraph*{Types and terms}

\begin{figure}
\begin{mathpar}
  \mprset{andskip=1em plus 0.5fil minus 0.5em}
  \vspace*{-1.5em}
  \jform{\ctxty[\base]{\Gamma}}[Contexts \(\Gamma,\Delta,\dots \in \Ctx_{\base}\)]
  \inferdef{ }{\ctxty{\emp}} \and
  \inferdef{\ctxty{\Gamma} \\ \ctxty{A}}{\ctxty{\Gamma . (x : A)}} \vspace*{-.5em}\\
  \vspace*{-.7em}
  \jform{\ctxty[\base]{A}}[Types \(A,B,\dots \in \Ty_{\base}\)]
  \inferdef{b \in \base}{\ctxty[\base]{b}} \and
  \inferdef{ }{\ctxty{\top}} \and
  \inferdef{\ctxty{A} \\ \ctxty{B}}{\ctxty{A \times B}} \and
  \inferdef{\ctxty{A} \\ \ctxty{B}}{\ctxty{A \to B}} \and
  \inferdef{ }{\ctxty{\bot}} \and
  \inferdef{\ctxty{A} \\ \ctxty{B}}{\ctxty{A + B}} \vspace*{-.5em}\\
  \vspace*{-.7em}
  \jform{\typing[\lang]{\Gamma}{t}[A]}[Terms \(t,u,\dots \in \Tm_{\lang}(\Gamma,A)\)]
  \inferdef[Cst]{c \in \const}{\typing[(\base,\const,\tyof)]{\Gamma}{c}[\tyof(c)]} \and
  \inferdef[Var]{(x : A) \in \Gamma}{\typing{\Gamma}{x}[A]} \and
  \inferdef[Star]{ }{\typing{\Gamma}{\st}[\top]} \and
  \inferdef[Pair]{\typing{\Gamma}{t}[A] \\ \typing{\Gamma}{u}[B]}{
      \typing{\Gamma}{(t,u)}[A \times B]} \and
  \inferdef[Proj]{\typing{\Gamma}{p}[A_1 \times A_2]}{
      \typing{\Gamma}{\proj{i}{p}}[A_i]} \and
  % \inferdef[Proj2]{\typing{\Gamma}{p}[A \times B]}{
  %     \typing{\Gamma}{\proj{2}{p}}[B]} \and
  \inferdef[Lam]{\typing{\Gamma.(x : A)}{t}[B]}{
      \typing{\Gamma}{\l x.t}[A \to B]} \and
  \inferdef[App]{\typing{\Gamma}{t}[A \to B] \\ \typing{\Gamma}{u}[A]}{
      \typing{\Gamma}{t~u}[B]} \and
  \inferdef[Raise]{\typing{\Gamma}{t}[\bot]}{\typing{\Gamma}{\rai t}[A]} \and
  \inferdef[InL]{\typing{\Gamma}{a}[A]}{\typing{\Gamma}{\inl a}[A + B]} \and
  \inferdef[InR]{\typing{\Gamma}{b}[B]}{\typing{\Gamma}{\inr b}[A + B]} \and
  \inferdef[If]{
    \typing{\Gamma}{s}[A + B] \\
    \typing{\Gamma.(x:A)}{b_l}[T] \\
    \typing{\Gamma.(x:B)}{b_r}[T] \\
  }{\typing{\Gamma}{\ite{s}{b_l}{b_r}}[T]}
\end{mathpar}
\caption{Types and terms of \stlcp with respect to a language
  \(\lang = (\base,\const,\tyof)\)}
\label{fig:stlc}
\end{figure}

We start with the simply-typed \(\lambda\)-calculus,
abbreviated as \stlcp,
with function types \(\to\), product types \(\times\),
sum types \(+\), and unit and empty types \(\top\) and \(\bot\).
To be able to state an interesting version of interpolation, we also
include an unspecified set of base types.
To make the result more interesting,
we likewise include a language of constants.

\begin{definition}[\formline{BasicAst.v}{18}{Lang} Language]
  \label{def:lang}
  A \emph{language} \(\lang\) consists of:
  \begin{itemize}
    \item a set \(\base\) of base types;
    \item a set \(\const\) of constants;
    \item types for each constant, \ie
      a function \(\tyof \in \const \to \Ty_{\base}\).%
    \footnote{\(\Ty_{\base}\) is the set of types generated by
    \(\base\), to be defined in \cref{def:types-terms}; formally these
    two definitions should be intertwined.} 
  \end{itemize}
  % We write \(\emptyset\) for the empty language, \ie that with no base types nor
  % constants.
\end{definition}

\begin{definition}[\formline{Typing.v}{16}{has_type} Types and terms]
\label{def:types-terms}
The definition of types and terms is given in \cref{fig:stlc}.
We decompose types into the \emph{negative} types (\(\to\), \(\times\) and
\(\top\)) and the \emph{positive} types (base types, \(+\) and \(\bot\)).
If \(\lang = (\base,\const,\tyof)\) we write
\(\Ty_{\lang}\) for \(\Ty_{\base}\), and similarly for contexts.
When it is not particularly relevant, we omit the language subscript.
\end{definition}

\subparagraph*{Substitutions}

\begin{figure}
\begin{mathpar}
  \jform{\typing{\Delta}{\sigma}[\Gamma]}[Substitutions
    \(\sigma,\tau,\dots \in \Sub_{\lang}(\Delta,\Gamma)\)]
  \inferdef[Id]{ }{\typing{\Gamma}{\id}[\Gamma]} \and
  \inferdef[Comp]{\typing{\Gamma}{\sigma}[\Delta] \and \typing{\Delta}{\tau}[\Theta]}{
      \typing{\Gamma}{\sigma \pcomp \tau}[\Theta]} \and
  \inferdef[Wk]{ }{\typing{\Gamma.(x:A)}{\wk}[\Gamma]} \and
  \inferdef[Cons]{\typing{\Delta}{\sigma}[\Gamma] \\ \typing{\Delta}{t}[A]}{
      \typing{\Delta}{\sigma,t}[\Gamma.(x:A)]} \and
  \inferdef[Up]{\typing{\Delta}{\sigma}[\Gamma]}{\typing{\Delta.(x:A)}{\Up \sigma}[\Gamma.(x:A)]} \and
  \inferdef[One]{\typing{\Gamma}{t}[A]}{\typing{\Gamma}{t..}[\Gamma.(x:A)]}
\end{mathpar}
\caption{\formline{Typing.v}{164}{TypingSubst} Substitutions and their typing}
\label{fig:subst}
\end{figure}

We work with parallel substitutions, the action of which we denote
\(\subs{t}{\sigma}\).
Although we use variable names for exposition,
we formally think of them in the de Bruijn style \cite{deBruijn1972}, and
formalise them this way in \Rocq. Accordingly, the
substitution calculus we use is essentially name-free.
We include the operations on substitutions we use in the rest
of the text in \cref{fig:subst}.%
\footnote{
  Although we describe the operations using rules, substitutions are not
  inductively generated by these!
  Rather, these operations should be seen as an ``API'' for substitutions,
  whose actual representation is irrelevant.
  In the formalisation, they are represented as functions \texttt{nat -> term},
  but they could very well be represented as lists without any major change.
}
The typing judgment \(\typing{\Delta}{\sigma}[\Gamma]\)
means that \(\sigma\) takes a variable \((x : A) \in \Gamma\) to a
term \(\sigma(x) \in \Tm(\Delta,A)\). The primitive's types should already be
informative, but let us describe them succinctly:
\(\id\) is the identity, mapping each variable to itself;
\(\sigma \pcomp \tau\) is composition,
mapping \(x\) to \(\subs{\sigma(x)}{\tau}\);
\(\wk\) drops the last variable in context, and maps all other variables to themselves;%
\footnote{In named syntax this is a no-op, but we include it
as it changes the context in which a term is considered.}
\(\sigma,t\) extends
a substitution \(\sigma\) by mapping the last variable in context to \(t\);
\(\Up \sigma\) maps the last variable to itself, and acts as \(\sigma\) otherwise;
finally \(t..\) maps the last variable to \(t\) and
acts as the identity otherwise.
The following substitution rule is admissible
(\formline{Typing.v}{312}{subst_typing}):
\begin{mathpar}
  \inferdef[Subst]{\typing{\Delta}{\sigma}[\Gamma] \\ \typing{\Gamma}{\sigma}[A]}
    {\typing{\Delta}{\subs{t}{\sigma}}[A]}
\end{mathpar}

\emph{Renamings}, denoted \(\Ren(\Delta,\Gamma)\), are the substitutions that map
variables to variables.

\subsection{Reduction and equations}
\label{sec:red-eq}

The language defined in \cref{fig:stlc}, quotiented by the adequate equational
theory, is the internal language of bi-cartesian closed categories (biCCC), \ie
with finite products, finite coproducts, and exponentials. Normalisation
for this equational theory is unwieldy due to \(\eta\) laws for positive types
(see Scherer~\cite{Scherer2017}). Fortunately,
we can get away with normalisation for a less powerful reduction
containing only the rules known as \emph{commuting conversions} \cite{Girard1989}.
This is because normal forms for this reduction have a good enough subformula
property, even though they are not unique representatives for the equational
theory of biCCCs.

\subparagraph*{Reduction}

To express commuting conversions, we use \emph{eliminations},
given by the following grammar (\formline{Ast.sig}{21}{}):
\( e \Coloneqq \sapp{t} \mid \sproj{i} \mid \site{t}{t}\)
The \emph{zip} of a term against an elimination \(\plug{e}{t}\) is defined in the obvious way (\formline{Elim.v}{9}{zip}).

Reductions are of two kinds: the usual \(\beta\)-reductions, corresponding
to computation --~applied abstraction, projection of a pair,
and branching on a constructor~-- and commuting conversions, which push
an elimination across another one of a positive type (\ie \(\bot\) or
a sum). In the case of \(\bot\), the elimination gets completely removed, as it
is unnecessary: a proof of \(\bot\) can be used to inhabit
any type, there is no need to further eliminate it. In the case of the sum,
the elimination gets duplicated in each branch of the \(\iteop\).
\begin{definition}[\formline{Reduction.v}{121}{term_red} Reductions]
  \emph{One-step reduction} \(\oredop\)
  is the congruent closure of the rules given in \cref{fig:stlc-red}.
  \emph{Reduction} \(\redop\) is the reflexive, transitive closure of one-step reduction.
\end{definition}

\begin{figure}
\begin{mathpar}
  % \jform{\conv{\Gamma}{t}{u}[A]}[Equality of terms]
  % \inferdef[Refl]{\typing{\Gamma}{t}[A]}{\conv{\Gamma}{t}{t}[A]} \and
  % \inferdef[Sym]{\conv{\Gamma}{t}{u}[A]}{\conv{\Gamma}{u}{t}[A]} \and
  % \inferdef[TRans]{\conv{\Gamma}{t}{u}[A] \\ \conv{\Gamma}{t}{v}[A]}
  %   {\conv{\Gamma}{t}{v}[A]} \\
  % \text{congruence/substitution rule} \\
  % \inferdef[Eq]{(t,u) \in \mathcal{E}(A)}{\conv{\Gamma}{t}{u}[A]} \and
  % \inferdef[Unit\(\eta\)]{\typing{\Gamma}{t}[\top]}{\conv{\Gamma}{t}{\st}[\top]} \\
  % \inferdef[Pair\(\beta\)]{\typing{\Gamma}{t}[A] \\ \typing{\Gamma}{u}[B]}{
  %   \conv{\Gamma}{\proj{1}{(t,u)}}{t}[A] \and
  %   \conv{\Gamma}{\proj{2}{(t,u)}}{u}[A]
  % } \and
  % \inferdef[Pair\(\eta\)]{\typing{\Gamma}{p}[A \times B]}{
  %   \conv{\Gamma}{p}{(\proj{1}{p},\proj{2}{p})}[A \times B]} \\
  % \inferdef[Fun\(\beta\)]{\typing{\Gamma,x:A}{t}[B] \\ \typing{\Gamma}{u}[A]}{
  %   \conv{\Gamma}{(\l x.t)~u}{\subs{t}{x}{u}}[A]} \and
  % \inferdef[Fun\(\eta\)]{\typing{\Gamma}{f}[A \to B]}{
  %   \conv{\Gamma}{f}{\l x.(f~x)}[A \to B]}
  \jform{\ored{t}{u}}[One-step reduction]
  \inferdef[Pair\textsubscript{1}\(\beta\)]{ }{\ored{\proj{1}{(t,u)}}{t}} \and
  \inferdef[Pair\textsubscript{2}\(\beta\)]{ }{\ored{\proj{2}{(t,u)}}{u}} \and
  \inferdef[Fun\(\beta\)]{ }{\ored{(\l x.t)~u}{\subs{t}{u..}}} \and
  \inferdef[InL\(\beta\)]{ }{\ored{\ite{\inl a}{b_l}{b_r}}{\subs{b_l}{a..}}} \and
  \inferdef[Inr\(\beta\)]{ }{\ored{\ite{\inr b}{b_l}{b_r}}{\subs{b_r}{b..}}} \\
  \inferdef[CommRais]{ }{\ored{\plug{e}{\rai t}}{\rai t}} \and
  \inferdef[CommIf]{ }{\ored{\plug{e}{\ite{s}{b_l}{b_r}}}{\ite{s}{\plug{e}{b_l}}{\plug{e}{b_r}}}}
\end{mathpar}
\caption{\formline{Reduction.v}{50}{term_ored} Reduction (congruence rules omitted)}
\label{fig:stlc-red}
\end{figure}

\begin{theorem}[\formline{ReductionConfluence.v}{173}{confluence} Confluence]
\label{thm:confluence}
Reduction is confluent: if \(\red{t}{u}\) and \(\red{t}{u'}\), then there exists
\(v\) such that \(\red{u}{v}\) and \(\red{u'}{v}\).
\end{theorem}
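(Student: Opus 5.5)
We prove confluence by the Tait--Martin-Löf parallel reduction method, in Takahashi's style: we use complete developments, since this is the variant that formalises most cleanly. We work on untyped raw terms, because reduction is defined without reference to typing.

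\textbf{Parallel reduction.} We define a parallel reduction \(t \Rightarrow u\) that contracts any set of redexes present in \(t\) simultaneously. It is the congruence closure of the rules in \cref{fig:stlc-red}, with every subterm allowed to reduce in parallel first. For instance:
\begin{itemize}
\item if \(t \Rightarrow t'\) and \(u \Rightarrow u'\), then \((\l x.t)~u \Rightarrow \subs{t'}{u'..}\);
\item if \(t\Rightarrow t'\) and \(e \Rightarrow e'\), then \(\plug{e}{\rai t} \Rightarrow \rai t'\);
\item if \(s,b_l,b_r,e\) reduce in parallel to \(s',b_l',b_r',e'\), then \(\plug{e}{\ite{s}{b_l}{b_r}} \Rightarrow \ite{s'}{\plug{e'}{b_l'}}{\plug{e'}{b_r'}}\).
\end{itemize}
Parallel reduction on eliminations is defined componentwise. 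We then check the standard sandwich \(\oredop \subseteq {\Rightarrow} \subseteq \redop\). Since \(\redop\) is the reflexive-transitive closure of both relations, confluence of \(\redop\) follows from the diamond property of \(\Rightarrow\). Two lemmas are needed along the way:
\begin{itemize}
\item substitutivity: if \(t \Rightarrow t'\) and \(\sigma \Rightarrow \sigma'\) pointwise, then \(\subs{t}{\sigma} \Rightarrow \subs{t'}{\sigma'}\); this is proved for renamings first, then for substitutions through \(\Up\);
\item compatibility with zipping: if \(e\Rightarrow e'\) and \(t\Rightarrow t'\), then \(\plug{e}{t}\Rightarrow\plug{e'}{t'}\).
\end{itemize}

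\textbf{Complete development.} We define a function \(t^\ast\) that contracts all redexes of \(t\) by structural recursion. The plan is to prove the triangle property: \(t \Rightarrow u\) implies \(u \Rightarrow t^\ast\). The diamond property is immediate from it.

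\textbf{Main obstacle.} The difficulty lies in the commuting conversions, which make the critical pairs non-orthogonal. A term \(\plug{e}{\ite{s}{b_l}{b_r}}\) can be a commuting redex while the scrutinee \(\ite{s}{b_l}{b_r}\) is itself a \(\beta\)-redex (when \(s\) is an injection) or itself a commuting redex. Also, \(e\) may be an \(\iteop\), so chains of nested eliminations \(\plug{e_1}{\plug{e_2}{\ite{s}{\dots}{\dots}}}\) can be contracted in several orders.

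The naive \(t^\ast\) does not give the triangle property here. Instead, \(t^\ast\) must push \emph{all} pending eliminations of an elimination spine into the branches at once. Concretely, we view a term as a head applied to a stack of eliminations \(\vec e\). When the head is \(\rai t\), the result is \(\rai t^\ast\). When the head is \(\ite{s}{b_l}{b_r}\), the result is \(\ite{s^\ast}{(\plug{\vec e}{b_l})^\ast}{(\plug{\vec e}{b_r})^\ast}\); if \(s^\ast\) is an injection, we additionally \(\beta\)-reduce. Recursion on terms of the form \(\plug{\vec e}{b_l}\) is not structural, so we justify termination with a size measure that counts the stack and branches.

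The triangle lemma is then proved by induction on the derivation of \(t\Rightarrow u\), generalised over elimination stacks. This generalised form is the real technical core: we expect it to require an auxiliary lemma stating that \(t\Rightarrow u\) and \(\vec e\Rightarrow\vec e'\) imply \(\plug{\vec e'}{u} \Rightarrow (\plug{\vec e}{t})^\ast\), together with a careful case analysis on the shape of \(s^\ast\).

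If this proves too delicate, the fallback is a modular argument. First, \(\beta\) alone is confluent by the standard proof. Second, commuting conversions alone are terminating and locally confluent, hence confluent by Newman's lemma. Finally, the two relations commute, which gives confluence by Hindley--Rosen. The parallel-development route remains our first choice, however, because commutation of \(\beta\) with commuting conversions is not obviously a one-step diamond.
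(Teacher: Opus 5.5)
Your preferred route, a single parallel reduction with a complete development, has a genuine gap. The $t^\ast$ you describe is not a parallel reduct of $t$ under the relation $\Rightarrow$ you define, so the triangle property is false.

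Your commuting rule produces $\ite{s'}{\plug{e'}{b_l'}}{\plug{e'}{b_r'}}$ with $b_l \Rightarrow b_l'$, so it only contracts redexes already present in $t$. Your $t^\ast$, however, recursively computes $(\plug{\vec e}{b_l})^\ast$. It therefore also contracts the commuting redexes \emph{created} when the stack lands on an $\iteop$ or $\rai$ inside a branch.

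Take variables $s,c,d,f,g,y$ and $t = (\ite{s}{\ite{c}{d}{f}}{g})~y$. Your definition gives $t^\ast = \ite{s}{\ite{c}{d~y}{f~y}}{g~y}$ (up to weakening). But the only parallel reducts of $t$ are $t$ itself and $\ite{s}{(\ite{c}{d}{f})~y}{g~y}$. Consequently:
\begin{itemize}
\item $t \Rightarrow t^\ast$ fails, so the triangle property fails already for $u = t$;
\item your auxiliary lemma $\plug{\vec e'}{u} \Rightarrow (\plug{\vec e}{t})^\ast$ is false (take $u = t = \ite{s}{\ite{c}{d}{f}}{g}$ and the stack consisting of the application to $y$).
\end{itemize}
The clause ``if $s^\ast$ is an injection, we additionally $\beta$-reduce'' has the same defect. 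For $s = (\l z.\inl{z})~w$ it contracts a redex created by the development itself, which no single parallel step can reach. In Takahashi's method the test must be on the syntactic shape of $s$.

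As you note yourself, the naive one-level development fails on overlaps such as $\plug{e_1}{\ite{(\ite{s}{b_l}{b_r})}{c_l}{c_r}}$. So the repair has to happen in $\Rightarrow$ itself. For example, give the commuting rule premises $\plug{e}{b_l} \Rightarrow u_l$ and $\plug{e}{b_r} \Rightarrow u_r$ instead of $b_l \Rightarrow b_l'$ and $b_r \Rightarrow b_r'$. But then $\Rightarrow$ no longer follows the term structure, and the triangle lemma is no longer a plain induction on derivations. In the junction cases, where a stack meets a $\lambda$, pair or injection, you need the triangle property for stack components that are not subderivations. None of this is worked out in your proposal, and it is exactly the difficulty the paper sidesteps by never treating $\beta$ and commuting conversions with one parallel reduction.

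Your fallback, on the other hand, is precisely the paper's proof, and you should make it the main argument. It is Hindley--Rosen with three ingredients:
\begin{itemize}
\item Takahashi's parallel reduction for $\beta$ alone, which has no critical pairs;
\item confluence of commuting conversions alone, from termination via a simple $\mathbb{N}$-valued measure, local confluence (12 critical pairs, which the paper checks with \textsc{CSI\textsuperscript{ho}}) and Newman's lemma;
\item commutation of the two reductions.
\end{itemize}
For commutation you do not need a one-step diamond of the combined relation. The paper observes that all critical pairs between $\beta$ and commuting conversions close in one step. For example, $(\ite{(\inl{a})}{b_l}{b_r})~u$ reaches $\subs{b_l}{a..}~u$ both directly and by CommIf followed by $\beta$.

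The duplication of redexes in both directions can be absorbed by Hindley's commutation lemma with parallel $\beta$ on one side. If $t \Rightarrow_\beta t_1$ and $t \to_c t_2$ by one commuting conversion, there is $v$ with $t_1 \to_c^\ast v$ and $t_2 \Rightarrow_\beta v$.
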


Since confluence is rather standard \cite{girard1987proof} yet tedious to prove, and
orthogonal to our focus, we do not formally prove it in \Rocq,
but only provide a pen-and-paper proof sketch.%
\footnote{Surprisingly, despite our best efforts we were not able to find a
detailed proof of that fact.}
% although Girard repeatedly claims it is trivial
% \cite{girard1987proof,Girard1989} with the Tait--Martin-Löf method, although
% it does actually not apply here.}
% (See details in Appendix). \as{Mettre cela en place!}

% \begin{proof}[Proof sketch]
% By Hindley-Rosen lemma, it suffices to show that 1. \(\beta\)-reduction is confluent 2. commuting conversion is confluent and 3. the two commute. 
% For the second, strong normalisation
%   of commuting conversion can be shown by a simple \(\mathbb{N}\)-valued measure
%   (\formline{ReductionConfluence.v}{114}{strong_nor}),
%   so showing local confluence is enough by Newman's lemma (\formline{ReductionConfluence.v}{148}{confluent}).
%   This is painful due to the many critical
%   pairs, so we have checked this with the automated tool \textsc{CSI\textsuperscript{ho}}
%   \cite{Nagele2017}.
% \end{proof}

\begin{proof}
  We use the Hindley-Rosen lemma \cite[Prop. 3.3.5]{Barendregt1985}, by which
  it suffices to show that 1. \(\beta\)-reduction is confluent 2. commuting
  conversion is confluent 3. the two commute. For the first, since there is
  no critical pair Takahashi's variant of parallel reduction à la Tait--Martin-Löf
  \cite{Tait1967,Takahashi1995} does the trick. For the second, strong normalisation
  of commuting conversion can be shown by a simple \(\mathbb{N}\)-valued measure
  (\formline{ReductionConfluence.v}{114}{strong_nor}),
  so by Newman's lemma \cite[3.1.42]{Barendregt1985}, it
  suffices to show local confluence (\formline{ReductionConfluence.v}{148}{confluent}).
  This is painful due to the many critical
  pairs, so we have checked this with the automated tool \textsc{CSI\textsuperscript{ho}}
  \cite{Nagele2017}. Finally, commutation again does not need normalisation because
  critical pairs between \(\beta\) and commuting conversion
  are all akin to the following, which can be closed in one step.
  \begin{tikzcd}[sep=small]
    & \ite{(\inl{a})}{b_l}{b_r}~u
      \arrow[squiggly,rd]
      \arrow[squiggly,ld] & \\
    \ite{(\inl{a})}{b_l~\subs{u}{\wk}}{b_r~\subs{u}{\wk}}
      \arrow[squiggly,rd]
      && \subs{b_l}{a..}~u \arrow[ld,equal] \\
    & \subs{b_l}{a..}~u
  \end{tikzcd}
\end{proof}

\subparagraph*{Equational theories}

To formally relate reduction to biCCCs, we can define
(\formline{Equations.v}{29}{conversion}) a notion of conversion
\(\conv{\Gamma}{t}{t'}[A]\) capturing equality of morphisms in a biCCC
as the least congruent equivalence relation containing
\(\beta\) and \(\eta\) laws, encoding the
universal property of each type former.
These are strictly stronger than
the commuting conversions of \cref{fig:stlc-red}; indeed we have
\[\conv
  {x : A + A, y : A + A, z : A}
  {\ite{x}{\ite{y}{z}{z}}{\ite{y}{z}{z}}}
  {\ite{y}{\ite{x}{z}{z}}{\ite{x}{z}{z}}}
  [A]
\]
by \(\eta\) for sums, but the two terms are not related by commuting conversions
alone.

This is not an issue, as our interpolation proof only uses conversion
in covariant positions, so we show stronger theorems using reduction,
and can weaken them to conversion, owing to:
\begin{lemma}[\formline{Equations.v}{720}{subject_reduction} Subject reduction]
  If \(\typing{\Gamma}{t}[A]\) and \(\red{t}{u}\) then \(\conv{\Gamma}{t}{u}[A]\).
\end{lemma}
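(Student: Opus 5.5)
We argue by induction on the derivation of \(\red{t}{u}\), which is the reflexive, transitive closure of one-step reduction. Since conversion is by definition an equivalence relation, the reflexive case is immediate and the transitive case follows from transitivity of conversion. For the transitive case we also need that typing is preserved along the way, \ie that \(\typing{\Gamma}{t}[A]\) and \(\ored{t}{t'}\) give \(\typing{\Gamma}{t'}[A]\), so that the induction hypothesis applies to \(t'\). This typing preservation is proved simultaneously with the conversion statement for one-step reduction, by induction on the derivation of \(\ored{t}{t'}\), generalising over \(\Gamma\) and \(A\).

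For the congruence cases of one-step reduction, we invert the typing derivation of \(t\) to obtain typings of its immediate subterms, in the appropriately extended context for binders such as \(\l x.t\) and the branches of \(\iteop\). We then apply the induction hypothesis to the reducing subterm and conclude using the congruence closure of conversion. For the \(\beta\)-rules, \(\textsc{Pair}_i\beta\) is direct from the corresponding conversion axiom. For \(\textsc{Fun}\beta\) and \(\textsc{InL}\beta\)/\(\textsc{InR}\beta\), inversion gives \(\typing{\Gamma.(x:B)}{t}[A]\) and \(\typing{\Gamma}{u}[B]\). We then use the \(\textsc{One}\) substitution typing together with the admissible \(\textsc{Subst}\) rule to type \(\subs{t}{u..}\), and conclude by the \(\beta\)-axiom of conversion, which is stated with the same substitution \(u..\).

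The real content lies in the commuting conversions \(\textsc{CommRais}\) and \(\textsc{CommIf}\), which are not axioms of the biCCC theory but must be derived from the \(\eta\)-laws of \(\bot\) and \(+\). We treat them uniformly in the elimination \(e\). For \(\textsc{CommRais}\), we consider the term \(\plug{e}{\rai t}\) at type \(A\) with \(\typing{\Gamma}{t}[\bot]\). Both \(\plug{e}{\rai t}\) and \(\rai t\) are then terms of type \(A\) that factor through \(t : \bot\), so the \(\eta\)-law for \(\bot\) identifies them.

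For \(\textsc{CommIf}\), we view \(\plug{e}{\ite{s}{b_l}{b_r}}\) as \(\subs{w}{(\ite{s}{b_l}{b_r})..}\) where \(w = \plug{e}{x}\), with the arguments of \(e\) weakened. We then use the \(\eta\)-law for sums in its "strong" form. This form states that for any term \(w\) with a free variable \(z : B_1 + B_2\), the term \(\subs{w}{s..}\) is convertible to \(\ite{s}{\subs{w}{(\inl x)..}}{\subs{w}{(\inr x)..}}\). Applying it with the subject \(\ite{s}{b_l}{b_r}\), or rather to the outer \(\iteop\) together with the \(\inl\)/\(\inr\) \(\beta\)-rules, we get \(\plug{e}{b_l}\) in the left branch and \(\plug{e}{b_r}\) in the right branch. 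If conversion is instead defined with the plain \(\eta\)-rule \(\conv{}{s}{\ite{s}{\inl x}{\inr x}}\), we first derive the strong form as an auxiliary lemma.

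We expect the main obstacle to be this case, specifically the bookkeeping of weakenings and substitutions: the elimination's arguments live in \(\Gamma\) but must be transported under the binders of the branches. This requires a handful of substitution lemmas, such as commutation of \(\subs{-}{\wk}\) with \(\subs{-}{u..}\) and the fact that \(\plug{e}{-}\) commutes with substitution, all of which are routine in the de Bruijn setting.
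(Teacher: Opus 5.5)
Your overall structure is the right one, and it is consistent with what the paper relies on. The paper gives no written proof and defers to the formalisation, but it defines conversion from \(\beta\)- and \(\eta\)-laws and states that conversion subsumes the commuting conversions. So the plan is: induction on the reduction with typing preservation, \(\beta\)-steps as axioms, congruence steps by inversion, and the two commuting conversions derived from the \(\eta\)-laws of \(\bot\) and \(+\). However, you instantiate the \(\eta\)-law wrongly in the \textsc{CommIf} case, which is the case you rightly call the heart of the matter. With \(w = \plug{e}{x}\), the substituted variable has the type \(T\) of the whole \(\iteop\), which is in general not a sum. So the sum \(\eta\)-law cannot be applied with subject \(\ite{s}{b_l}{b_r}\). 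Even when \(T\) is a sum, expanding on that subject gives \(\ite{(\ite{s}{b_l}{b_r})}{\plug{e}{\inl x}}{\plug{e}{\inr x}}\), which is not the reduct. The expansion has to be on the scrutinee \(s\), keeping the \(\iteop\) inside \(w\). Take \(w \coloneq \plug{e}{\ite{z}{b_l}{b_r}}\) over \(\Gamma.(z : B_1 + B_2)\), with \(e\) weakened and \(b_l, b_r\) lifted under \(z\). Then:
\begin{itemize}
\item \(\subs{w}{s..}\) is literally the redex;
\item strong \(\eta\) produces the branches \(\plug{e}{\ite{(\inl x)}{b_l}{b_r}}\) and \(\plug{e}{\ite{(\inr x)}{b_l}{b_r}}\);
\item \textsc{InL}\(\beta\) and \textsc{InR}\(\beta\), applied under the congruence, turn these into \(\plug{e}{b_l}\) and \(\plug{e}{b_r}\).
\end{itemize}
Your phrase ``together with the \(\inl\)/\(\inr\) \(\beta\)-rules'' only makes sense for this \(w\), so state it explicitly. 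Likewise, the \textsc{CommRais} instance is \(w \coloneq \plug{e}{\rai z}\) with \(z : \bot\).

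Second, your fallback does not work. The parametrised (strong) \(\eta\)-law for sums cannot in general be derived from the plain rule \(s \equiv \ite{s}{\inl x}{\inr x}\) together with \(\beta\). The plain rule gives no way to move a surrounding context \(w\) into the branches, and doing so is exactly a commuting conversion. Strong \(\eta\) is equivalent to plain \(\eta\) plus the commuting conversions, not to plain \(\eta\) alone. For instance, \(\beta\) and plain \(\eta\) do not identify \((\ite{z}{f}{g})~u\) with \(\ite{z}{f~u}{g~u}\). This does not sink your proof, because the paper's conversion encodes the universal property of each type former in context. The paper also states that conversion is strictly stronger than the commuting conversions, so the strong form, or something equivalent, is available. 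You should rely on that rather than on the auxiliary lemma you propose.
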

We actually go further and define a notion of \emph{theory}
(\formline{Equations.v}{7}{Theory}),
which extends the specification of a language with
a set of arbitrary equations that are added to its conversion.
The proofs are essentially unchanged, but this means that
our interpolation can be leveraged to prove general properties of
biCCCs by viewing \stlc as their internal language,
although we do not explore this further here, and refer to
\Cubric \cite{Cubric1994,Cubricphd} for details.

\section{Bidirectional typing, normal forms and the subformula property}
\label{sec:bidirectional}

In logic, a key characteristic of derivations is the \emph{subformula property}, satisfied
when every formula appearing in the derivation is a subformula of some formula
in the conclusion judgment --~hypothesis or conclusion. This
does not hold of arbitrary derivations, but is satisfied by those which
are sufficiently normalised, \ie cut-free proofs \cite{Girard1989}.
In a derivation satisfying the subformula property, no formula is ever ``invented'',
a powerful structural property with many consequences. Interpolation is
for instance typically proven for cut-free derivations.
One then obtains properties for all provable
statements by a combination of cut elimination, by which every
proof is equal to a cut-free one, and an analysis of cut-free proofs, which can
exploit its more constrained structure, including the subformula property.

For a type theorist, this might seem a bit mysterious. Cut elimination, by
the Curry-Howard correspondence, is well-known to coincide with reduction of
terms viewed as proof witnesses. However, what does the subformula property
amount to? This is much less known, but it can be very helpfully linked to
\emph{bidirectional typing}.

In bidirectional typing,
which originates from the implementation of type-checking algorithms \cite{Coquand1996,Pierce2000},
typing is decomposed into the mutually defined
\emph{inference} and \emph{checking} judgments. The operational intuition is
that in inference the type is an output,%
\footnote{\textit{I.e.} the question is ``What is the type of this term?''.}
while in checking it is an input.%
\footnote{And the question is ``Does this term have this type?''.}
This attention to the flow of information is, in a sense, what
the subformula property also intends to capture: in both cases, the intuition
is that type information always comes from somewhere and is never invented.
% And in fact bidirectional typing can helpfully be leveraged to exploit
% this subformula property.

With this view it is perhaps less surprising that bidirectional
typing can be used to capture normal forms, since
these correspond to cut-free proofs which have a good
subformula property. This remark, however, has powerful consequences:
because bidirectional type systems have a nice, inductive structure, they
give a good basis on which to build clean proofs of theorems that rely on
the subformula property, such as interpolation. The remainder of this article
is dedicated to delivering that idea: the rest of this section sets up the
bidirectional characterisation of normal forms, while \cref{sec:interpolation}
derives the interpolation result.

\subsection{A bidirectional characterisation of normal forms}

\begin{figure}
\begin{mathpar}
  \jform{\check{\Gamma}{t}{A}}[Checking/normal forms \(\ccol{t},\ccol{u},\dots \in \Nf_{\lang}(\Gamma,A)\)]
  
  \inferdef[Star]{ }{\check{\Gamma}{\st}{\top}}
  \label{rule:star-b} \and
  \inferdef[Pair]{\check{\Gamma}{t}{A} \\ \check{\Gamma}{u}{B}}{
      \check{\Gamma}{(t,u)}{A \times B}}
  \label{rule:pair-b} \and
  \inferdef[Lam]{\check{\Gamma,x : A}{t}{B}}{
      \check{\Gamma}{\l x.t}{A \to B}}
  \label{rule:lam-b} \and
  \inferdef[InL]{\check{\Gamma}{a}{A}}{
      \check{\Gamma}{\inl a}{A + B}}
  \label{rule:inl-b} \and
  \inferdef[InR]{\check{\Gamma}{b}{B}}{
      \check{\Gamma}{\inr b}{A + B}}
  \label{rule:inr-b} \and
  \inferdef[Demote]{\infer{\Gamma}{t}{A} \\ A = B}{\check{\Gamma}{\dem{\icol{t}}}{B}}
  \label{rule:demote-b} \\
  \inferdef[Raise]{\infer{\Gamma}{\icol{t}}{\bot}}{\check{\Gamma}{\rai \icol{t}}{A}}
  \label{rule:raise-b}\and
  \inferdef[If]{
    \infer{\Gamma}{s}{A + B} \\
    \check{\Gamma.(x:A)}{b_l}{T} \\
    \check{\Gamma.(x:B)}{b_r}{T} \\
  }{\check{\Gamma}{\ite{\icol{s}}{b_l}{b_r}}{T}}
  \label{rule:if-b} \\
  \jform{\infer{\Gamma}{t}{A}}[Infering/neutral forms \(\icol{t},\icol{u},\dots \in \Ne_{\lang}(\Gamma,A)\)]
  \inferdef[Cst]{c \in \mathcal{C}}{\infer[\lang]{\Gamma}{c}{\tyof(c)}}
  \label{rule:cst-b} \and
  \inferdef[Var]{(x : A) \in \Gamma}{\infer{\Gamma}{x}{A}}
  \label{rule:var-b} \\
%  \inferdef[Proj]{\infer{\Gamma}{p}{A \times B}}{
%      \infer{\Gamma}{\proj{1}{p}}{A} \and \infer{\Gamma}{\proj{2}{p}}{B}} \and
  \inferdef[Proj$_1$]{\infer{\Gamma}{p}{A \times B}}{
      \infer{\Gamma}{\proj{1}{p}}{A}} \and
        \inferdef[Proj$_2$]{\infer{\Gamma}{p}{A \times B}}{
      \infer{\Gamma}{\proj{2}{p}}{B}} 
        \label{rule:proji-b} \and
  \inferdef[App]{\infer{\Gamma}{t}{A \to B} \\ \check{\Gamma}{u}{A}}{
      \infer{\Gamma}{t~\ccol{u}}{B}}
  \label{rule:app-b}
\end{mathpar}
\caption{\formline{Bidir.v}{11}{normal} Bidirectional typing for the \stlcp,
  with respect to a language \(\lang = (\base,\const,\tyof)\)}
\label{fig:bidir}

\end{figure}

Our bidirectional type system is given in \cref{fig:bidir}. It is
inspired by the normal forms of Abel and Sattler~\cite{Abel2019a},
modified to not enforce \(\eta\)-long forms. It can be interpreted in two ways.

In the bidirectional mindset, rules can be read as
a typing algorithm for a fragment of \stlcp. Rules for the inference judgment
\(\infer{\Gamma}{t}{A}\) correspond to settings where it is feasible to infer
the type \(A\) from \(t\), by inferring that of a carefully chosen subterm
(in the case of a negative elimination \nameref{rule:app-b}, \nameref{rule:proji-b})
or directly reading it from the
local (\nameref{rule:var-b}) or global (\nameref{rule:cst-b}) context.
Conversely, rules for checking \(\check{\Gamma}{t}{A}\) are those where it is better to check
that a term has a given type. This encompasses constructors
(\nameref{rule:star-b}, \nameref{rule:pair-b}, \nameref{rule:lam-b}, \nameref{rule:inl-b},
\nameref{rule:inr-b}), where the type is decomposed into parts
which are used to check the subterms;%
\footnote{One could wonder why \nameref{rule:star-b}
is not inferring. While possible, it is more uniform to treat \(\star\)
as a nullary version of pairing. Anyway, since it can
only appear in a checking position this makes little difference.}
any inferring term, which we can check provided the inferred and checked types
coincide (\nameref{rule:demote-b}); and more interestingly, the
elimination for positive types (\nameref{rule:raise-b}, \nameref{rule:if-b}).
It is rather clear that \(\rai\) cannot be reasonably inferring: how are
we supposed to guess its type? For \(\iteop\), we could infer
the type of the branches, comparing them, and if they agree infer that type
for the whole pattern-matching. However, our version ends up being better behaved.

In this view, the bidirectional type system arises from the will to
type annotation-free terms. To do so, we can only rely on type information
which is already available, either in the context, or in the type
in the case of checking. This means that all the type information
should appear in these, which, to reiterate,
is exactly what the subformula property captures!

The other approach views the \(\check{\Gamma}{t}{A}\)
judgment as characterising \emph{normal forms}, those which do not reduce.
When inductively characterising normal forms, it is necessary to mutually define
\emph{neutral forms}, those which can be used as the main argument of
an elimination without creating redexes. Usually, any elimination of a neutral
is again a neutral. But with commuting conversions this will not do,
as an eliminator on top of \(\rai\) or \(\iteop\) reduces. So in that view too,
it makes sense to classify these as normal forms but not neutrals.

We prove that this adequately captures well-typed terms that do not reduce,
\ie that our inductive characterisation indeed captures normal forms in the
more traditional sense.
% Here we express the theorems only for normal forms,
% but they of course also hold for neutrals.

\begin{lemma}[\formline{MetaTheory.v}{56}{normal_bidir} Normality]
  If \(\check{\Gamma}{t}{A}\), then \(\typing{\Gamma}{t}[A]\)
  and there is no \(t'\) such that \(\ored{t}{t'}\).
\end{lemma}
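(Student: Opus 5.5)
We prove the statement by mutual induction on the derivations of \(\check{\Gamma}{t}{A}\) and \(\infer{\Gamma}{t}{A}\). We strengthen it simultaneously for neutrals. If \(\infer{\Gamma}{t}{A}\), then \(\typing{\Gamma}{t}[A]\), \(t\) has no one-step reduct, and moreover \(t\) is \emph{not} of the form \(\l x.u\), \((u,v)\), \(\st\), \(\inl u\), \(\inr u\), \(\rai u\) or \(\ite{s}{b_l}{b_r}\). Syntactically, a neutral is a variable or constant followed by a spine of applications and projections. The typing half of both statements is immediate: each bidirectional rule is mapped to the corresponding rule of \cref{fig:stlc}. For \nameref{rule:demote-b} we use \(A = B\), and for \nameref{rule:raise-b} and \nameref{rule:if-b} we use the induction hypotheses on the premises.

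For irreducibility, we invert a hypothetical step \(\ored{t}{t'}\). It is either a congruence step inside an immediate subterm, or a head rule of \cref{fig:stlc-red}. Congruence steps are excluded by the induction hypotheses: every immediate subterm of a bidirectional derivation is itself checked or inferred, possibly in an extended context, as for \nameref{rule:lam-b} and the branches of \nameref{rule:if-b}. Head steps are ruled out by shape. Constructors and variables or constants have no head redex. For \nameref{rule:demote-b} the term is literally the neutral, so the neutral case applies. For \nameref{rule:app-b} and \nameref{rule:proji-b}, a head \(\beta\)-step would require the function or pair to be a \(\lambda\) or a pair. A head commuting conversion would require it to be a \(\rai\) or an \(\iteop\). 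The strengthened invariant on the inferring premise excludes all of these shapes. For \nameref{rule:raise-b} and \nameref{rule:if-b}, the head could only be \(\rai\) or \(\iteop\) applied to the scrutinee. But then the whole term is of the form \(\plug{e}{u}\) only if the elimination is outermost, with \(u\) the scrutinee. An \(\iteop\) redex (\(\beta\) or CommIf/CommRais with \(e = \site{b_l}{b_r}\)) requires the scrutinee \(s\) to be \(\inl\), \(\inr\), \(\rai\) or \(\iteop\). Since \(s\) is inferred, this is again excluded by the invariant. \(\rai\) itself is not an elimination in the grammar \(e\), so \(\rai t\) can only reduce inside \(t\).

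The shape invariant for neutrals is trivially preserved by each inference rule, since the head constructor is variable, constant, application or projection.

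The main obstacle is purely bookkeeping. We must make precise that the one-step relation on \(\plug{e}{u}\) decomposes as above. This requires an inversion lemma for \(\oredop\) on the head constructor, whose commuting-conversion cases are phrased via \(\plug{e}{\cdot}\). We would first prove such an inversion lemma by case analysis on \(e\) (zip is injective enough to recover \(e\) and \(u\) from the head constructor). The mutual induction then goes through uniformly.
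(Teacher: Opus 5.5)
Your proof is correct and follows the expected route; the paper gives no written proof and only points to the formalisation. The route is mutual induction on the checking and inferring derivations, with the shape invariant that inferred terms are a variable or constant under a spine of applications and projections, plus inversion of \(\oredop\) through unique decomposition of \(\plug{e}{u}\). One side remark: your step ``\(\rai\) is not an elimination'' is faithful to the grammar displayed in the paper, but Progress needs \(\rai\) to act as one, since otherwise \(\rai{(\rai{x})}\) with \(x : \bot\) would be well typed and irreducible yet not normal. Your argument is unaffected either way, because your invariant on the inferred argument of \nameref{rule:raise-b} already rules out the CommRais and CommIf redexes this would add.
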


\begin{theorem}[\formline{MetaTheory.v}{29}{progress} Progress]
  If \(\typing{\Gamma}{t}[A]\), then either there is \(t'\) such that
  \(\ored{t}{t'}\), or \(\check{\Gamma}{t}{A}\).
\end{theorem}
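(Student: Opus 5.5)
We prove Progress by induction on the typing derivation \(\typing{\Gamma}{t}[A]\), with a strengthened statement that also covers neutrality. Concretely, we show that if \(\typing{\Gamma}{t}[A]\) then one of three things holds:
\begin{enumerate}
\item \(t\) reduces in one step;
\item \(\infer{\Gamma}{t}{A}\);
\item \(\check{\Gamma}{t}{A}\) and \(t\) is not neutral.
\end{enumerate}
The third case covers constructors, \(\rai\), and \(\iteop\). The theorem follows, since an inferring \(t\) also satisfies \(\check{\Gamma}{t}{A}\) by \nameref{rule:demote-b} with the trivial type equality. We need to track the neutral/non-neutral split explicitly, because elimination rules behave differently depending on the shape of their principal argument.

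\textbf{Axioms and constructors.} \textsc{Cst} and \textsc{Var} give inference directly. \textsc{Star}, \textsc{Pair}, \textsc{Lam}, \textsc{InL} and \textsc{InR} go by the induction hypotheses on their subterms. If some subterm reduces, the whole term reduces by congruence. Otherwise every subterm checks, and the corresponding checking rule applies. For \textsc{Lam} the hypothesis is used in the extended context \(\Gamma.(x:A)\).

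\textbf{Eliminations.} For \textsc{Proj}, \textsc{App}, \textsc{Raise} and \textsc{If}, we apply the induction hypothesis to the principal argument.
\begin{itemize}
\item If the principal argument reduces, we conclude by congruence.
\item If it is inferring, we use the induction hypotheses on the remaining premises: the argument of \textsc{App}, and the branches of \textsc{If} in their extended contexts. Either one of these reduces and we use congruence, or they all check. We then conclude with \nameref{rule:app-b} or \nameref{rule:proji-b} (inferring), or with \nameref{rule:raise-b} or \nameref{rule:if-b} (checking, non-neutral).
\item If it is checking but not neutral, we must exhibit a redex.
\end{itemize}
For the last subcase we need an inversion lemma: a non-neutral normal form \(\check{\Gamma}{t}{A}\) is either a constructor matching the head type former of \(A\), or of the form \(\rai u\), or of the form \(\ite{s}{b_l}{b_r}\). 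This is a routine inspection of \cref{fig:bidir}, using that \nameref{rule:demote-b} only produces neutrals. The redex is then found by cases:
\begin{itemize}
\item A constructor gives a \(\beta\)-redex: \textsc{Pair}\(_i\beta\), \textsc{Fun}\(\beta\), or \textsc{InL}/\textsc{InR}\(\beta\). For \(\rai\), no constructor of type \(\bot\) exists, so this case does not arise.
\item A \(\rai\) or \(\iteop\) in the principal position is eliminated by \(\plug{e}{-}\), which is exactly a \textsc{CommRais} or \textsc{CommIf} redex. Here \(e\) is the elimination being typed: \(\sapp{u}\), \(\sproj{i}\), \(\site{b_l}{b_r}\), or the \(\rai\) itself, read as an elimination of \(\bot\).
\end{itemize}

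\textbf{Main obstacle.} The \(\rai\) case of the last point is the step we expect to be delicate. The grammar of eliminations as stated contains only \(\sapp{t}\), \(\sproj{i}\) and \(\site{t}{t}\). A term \(\rai (\rai u)\) or \(\rai(\ite{s}{b_l}{b_r})\) must therefore either be reducible under the paper's definition of zip, or be accepted as normal. Under the bidirectional rule \nameref{rule:raise-b}, however, \(\rai\) requires an inferring argument, so it cannot be accepted as normal. We would first check whether the formal grammar also includes \(\rai\) as an elimination, which would make \textsc{CommRais}/\textsc{CommIf} fire. If it does not, the progress statement as given fails on these terms, and the fix is to add that elimination.

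Apart from this, the proof is a large but mechanical case analysis. The main bookkeeping is the context extension for \textsc{Lam} and the \textsc{If} branches, which is handled by stating the induction for arbitrary \(\Gamma\).
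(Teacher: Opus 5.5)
Your proof is correct and takes essentially the paper's route: the paper gives no written proof, only a roughly 30-line, automation-driven formal one, and your argument is the natural induction on the typing derivation behind it. You split the principal argument of each eliminator into reducible, inferring, or non-neutral checking, and the last case yields a \(\beta\)-redex or a commuting conversion. Your worry about \(\rai\) is also justified: with the elimination grammar exactly as printed, \(\rai(\rai x)\) for \(x : \bot\) is well-typed, irreducible and not normal, so Progress (and Normalisation) require \(\rai\) to count as an elimination. The paper tacitly assumes this when it pushes \(\rai\) into covers for the semantic eliminator rules, and with that reading your case analysis goes through unchanged.
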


\subsection[Normalisation for STLCp]{Normalisation for \stlcp}

In \cref{sec:interpolation}, we prove interpolation for normal forms only.
To apply it more generally, we show that every well-typed term
reduces to a normal form. % \ie a normalisation theorem.
We give a proof by
logical relations, heavily inspired by Abel and Sattler~\cite{Abel2019a},
although compared to their normalisation-by-evaluation approach we are able to
obtain a result about reduction rather than conversion.
% As for all logical relation argument, the proof goes by defining a semantic
% notion of typing, which implies normalisation, but is strengthened to be stable
% under the semantic version of all typing rules.

% \subparagraph*{Definitions}

In a logical relation argument, the key definition is that of values,
which goes by induction on types.
As is standard, values at negative
types are normal forms whose observations are reducible:%
\footnote{Reducible terms are those which reduce to values.}
a value at a pair type is one with reducible projections, and a value
at function type takes value inputs to reducible outputs.

However, things are trickier at positive types. Usually, one
would say that a value at a positive type is given either by a constructor
or a neutral. This will not do here, as we must take into account eliminations
of \(+\) and \(\bot\), which are not neutrals. Instead, normal forms at positive
types are case trees, with nodes branching on neutrals of positive
types, and leaves having the usual shape of a neutral or a constructor.
To capture this, we introduce a notion of \emph{covering}, which
more generally describes terms with a case tree structure, and
echoes the use of the free cover monad in \textit{op.\ cit.}~\cite{Abel2019a}.

\begin{definition}[\formline{MetaTheory.v}{97}{covered} Covering]
  A term \(t \in \Tm(\Delta,A)\) is \emph{covered} by a predicate
    \(F \in (\Gamma \in \Ctx) \to \Tm(\Gamma,A) \to \Prop\)
  if either:
  \begin{itemize}
    \item \(F(t)\) holds;
    \item \(t = \rai n\) with \(\infer{\Gamma}{n}{\bot}\);
    \item \(t = \ite{n}{b_l}{b_r}\), with \(\infer{\Gamma}{n}{A_1 + A_2}\),
      and \(b_l\), \(b_r\) are covered, respectively in \(\Gamma.(x:A_2)\) and
      \(\Gamma.(x:A_2)\).
  \end{itemize}
  We say that a term \(t \in \Tm(\Gamma,A)\) is ``covered by neutrals'' if it is covered
  by the predicate \(F(\Delta,u) \Coloneqq \infer{\Delta}{u}{A}\), and similarly
  for other predicates on terms.
\end{definition}

We can view this as lifting a base predicate \(F\) to a
new one which incorporates the case tree structure.
Indeed, given any predicate \(F \in (\Gamma \in \Ctx) \to \Tm(\Gamma,A) \to \Prop\),
``being covered by \(F\)'' is again a predicate of the same type.

% The next definition of values and reducible terms goes by induction
% on types.

\begin{definition}[\formline{MetaTheory.v}{135}{value} Values]
  A \emph{raw value} at a positive type \(P\) is a \(t \in \Tm(\Gamma,T)\)
  such that:
  \begin{itemize}
    \item \(P\) is a base type and \(t\) is neutral;
    \item \(P\) is \(\bot\) and \(t\) is neutral;
    \item \(P\) is \(A + B\) and \(t\) is either neutral, \(\inl(a)\) where
      \(a\) is a value of type \(A\) in \(\Gamma\), or \(\inr(b)\) where \(b\) is
      a value of type \(B\) in \(\Gamma\).
  \end{itemize}

  A \emph{value} at type \(T\) is a term \(v \in \Tm(\Gamma,T)\) such that:
  \begin{itemize}
    \item if \(T\) is a positive type, \(v\) is covered by raw values;
    \item if \(T\) is \(\top\), \(v\) is a normal form;
    \item if \(T\) is \(A_1 \times A_2\), \(v\) is a normal form and \(\proj{i}{v}\)
      is reducible at \(A_i\);
    \item if \(T\) is \(A \to B\), \(v\) is a normal form and for any
      context \(\Delta\), renaming \(\typing{\Delta}{\rho}[\Gamma]\),
      and value \(u \in \Tm(\Delta,A)\), \(\subs{v}{\rho}~u\) is reducible.
  \end{itemize}
\end{definition}
  
\begin{definition}[\formline{MetaTheory.v}{128}{reducible} Reducibility]
  A term \(t\) is \emph{reducible} if \(\lred{t}{v}\) for some value \(v\).
\end{definition}

This extends to substitutions: a value substitution
is a substitution \(\sigma \in \Sub(\Delta,\Gamma)\) mapping each variable
\((x : A) \in \Gamma\) to a value \(\sigma(x)\) at \(A\).
We use this to define semantic typing.

\begin{definition}[\formline{MetaTheory.v}{160}{sem_has_type} Semantic typing]
  A term \(t\) is \emph{semantically well-typed}, written \(\semty{\Gamma}{t}[A]\),
  if for any value substitution \(\sigma \in \Sub(\Delta,\Gamma)\), \(\subs{t}{\sigma}\)
  is reducible at \(A\) in \(\Delta\).
\end{definition}

% \subparagraph*{Properties}

Before we can get on to proving the semantic counterparts to the typing rules in
\cref{fig:stlc}, we first need a handful of properties of covers, values and
reducible terms.

\begin{lemma}[\formline{MetaTheory.v}{193}{reify} Reification]
  Any value is a normal form.
\end{lemma}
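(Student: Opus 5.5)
The proof goes by induction on the type \(T\) at which \(v\) is a value, following the clauses of the definition of values. Throughout, ``normal form'' means \(\check{\Gamma}{v}{T}\) as in \cref{fig:bidir}. The negative cases are immediate, because the definition of values at \(\top\), \(A_1 \times A_2\) and \(A \to B\) explicitly requires \(v\) to be a normal form; the extra reducibility conditions on projections and applications are simply discarded. Only the positive types need work, and there the value is not assumed normal but covered by raw values.

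For a positive type \(P\), I first prove an auxiliary lemma about covers. If \(F\) is a predicate on terms of type \(P\) such that \(F(\Delta,u)\) implies \(\check{\Delta}{u}{P}\), then every \(t\) covered by \(F\) satisfies \(\check{\Gamma}{t}{P}\). The proof is by induction on the covering derivation:
\begin{itemize}
  \item if \(F(t)\) holds, we conclude by the hypothesis on \(F\);
  \item if \(t = \rai n\) with \(\infer{\Gamma}{n}{\bot}\), we use \nameref{rule:raise-b};
  \item if \(t = \ite{n}{b_l}{b_r}\) with \(\infer{\Gamma}{n}{A_1 + A_2}\), the induction hypotheses give \(\check{\Gamma.(x:A_i)}{b_i}{P}\), and \nameref{rule:if-b} concludes.
\end{itemize}
Note that the \(\iteop\) rule is typed at an arbitrary \(T\), so nothing about \(P\) is used here. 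In the formalisation, this should be the generic ``covers preserve any checking-closed predicate'' lemma.

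It then remains to show that raw values at positive \(P\) are normal forms, by inspecting the clauses:
\begin{itemize}
  \item a neutral \(t\) with \(\infer{\Gamma}{t}{P}\) is turned into \(\check{\Gamma}{\dem{t}}{P}\) by \nameref{rule:demote-b}, with the trivial type equality;
  \item for \(\inl a\) with \(a\) a value at \(A\), the outer induction hypothesis at the strictly smaller type \(A\) gives \(\check{\Gamma}{a}{A}\), and \nameref{rule:inl-b} concludes;
  \item the \(\inr\) case is symmetric.
\end{itemize}
This is why the induction must be on types: raw values at \(A+B\) refer to values at \(A\) and \(B\), not just to raw values.

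The main obstacle I expect is not mathematical but definitional. Values at \(A \to B\) refer to reducibility at \(B\), and raw values at sums refer to values at the components, so the definition must be well-founded by recursion on types. The statement has to be proven by structural recursion on \(T\) that unfolds exactly the same way. In \Rocq this should be a \texttt{Fixpoint} on the type, or an induction on the inductive value predicate if values are defined inductively, with the cover lemma stated generically in \(F\) so that it can be instantiated with ``raw value'' under the induction hypothesis. A minor point is the neutral-versus-normal distinction: the neutral case must be embedded via \nameref{rule:demote-b}, since neutrals are not checking terms on their own.
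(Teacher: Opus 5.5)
Your proof is correct and follows the paper's approach. The paper also argues by induction on types. The negative cases are immediate from the definition of values, and the positive cases rest on the key fact that a term covered by normal forms is itself a normal form, which is exactly your auxiliary cover lemma; raw values are then handled by Demote and by the induction hypothesis under InL/InR.
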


\begin{lemma}[\formline{MetaTheory.v}{202}{reflect} Reflection]
  \label{lem:reflect}
  Any neutral is a value. 
\end{lemma}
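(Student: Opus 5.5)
We prove the statement by induction on the type \(T\) of the neutral \(\infer{\Gamma}{n}{T}\). Since the clause for functions quantifies over renamings \(\rho\), we strengthen the statement to: for every renaming, \(\subs{n}{\rho}\) is a value. We first record that neutrals are stable under renaming; this is a direct mutual induction on the rules of \cref{fig:bidir}, and it makes the strengthened form equivalent to the original. We also use throughout that a neutral is a normal form, by \nameref{rule:demote-b}, because the inferred and checked types coincide.

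\emph{Positive types.} If \(T\) is a base type, \(\bot\), or \(A+B\), then \(n\) is by definition a raw value, since each clause of the raw-value definition explicitly allows a neutral. Hence \(n\) is covered by raw values through the first clause of covering, and is therefore a value.

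\emph{Unit.} If \(T = \top\), then \(n\) is a normal form by the remark above, which is all that is required.

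\emph{Products.} If \(T = A_1 \times A_2\), then \(n\) is a normal form. Moreover \(\proj{i}{n}\) is again neutral at \(A_i\) by \nameref{rule:proji-b}. By the induction hypothesis it is a value, and so it is reducible in zero steps.

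\emph{Functions.} If \(T = A \to B\), then \(n\) is again normal. Given a renaming \(\rho\) and a value \(u\) at \(A\), we use Reification to get that \(u\) is a normal form, \(\check{\Delta}{u}{A}\). Since \(\subs{n}{\rho}\) is neutral, \nameref{rule:app-b} gives \(\infer{\Delta}{\subs{n}{\rho}~u}{B}\). The induction hypothesis at \(B\) then makes this application a value, hence reducible.

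The only delicate point is the dependency between the two lemmas. Reification is needed at the domain \(A\), so the two lemmas should really be proved by one simultaneous induction on types, exactly as in the standard reify/reflect pair. This works because Reification at \(A \to B\) only needs normality, which is built into the definition of values at function type. Reification at a positive type needs a separate induction on covers: the \(\iteop\) and \(\rai\) nodes become \nameref{rule:if-b} and \nameref{rule:raise-b}, and leaves that are constructors recurse on the subtype. If the paper proves Reification independently, we simply invoke it. The main obstacle I expect is bookkeeping rather than mathematics: making sure the neutrality judgment is preserved under renaming, and that values are stated in a Kripke-style way so the inductive hypothesis applies in the extended context \(\Delta\).
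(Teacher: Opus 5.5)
Your proof is correct and follows the paper's argument: induction on the type, where neutrals are raw values at positive types, projections of a neutral are again neutral at product types, and Reification is used at function types to turn the value argument into a normal form so that the application is neutral. One minor point: no simultaneous induction is needed, because in the paper Reification is proven first on its own (it never appeals to Reflection) and Reflection simply invokes it.
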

These lemmas are named by analogy with the reify/reflect functions of normalisation by
evaluation. The lemmas are both proven by induction on
types. For the first, the key ingredient is that a term covered by normal
forms is again a normal form. We need reification in reflection, to handle
reducibility at function types where to conclude that applying the neutral
to a value yields a neutral we need to turn that value into a normal form.

Next, we show that a term covered by reducible terms is reducible.
At positive types, it follows from the next lemma, which can be read as
the multiplication of a covering monad.
\begin{lemma}[\formline{MetaTheory.v}{112}{join_covered} Multiplication]
  A term covered by terms covered by \(F\) is covered by \(F\).
\end{lemma}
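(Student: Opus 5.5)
The plan is a straightforward induction on the derivation that \(t\) is covered by the predicate ``being covered by \(F\)''. Call this outer predicate \(G\), so \(G(\Gamma,u)\) holds exactly when \(u\) is covered by \(F\) in \(\Gamma\). Covering is an inductively defined predicate with three constructors: the base case, the \(\rai\) case and the \(\iteop\) case. Its induction principle lets us prove that \(t\) is covered by \(F\) by treating each constructor separately, with an induction hypothesis available for the two branches of an \(\iteop\). We quantify over the context \(\Gamma\) and the type \(A\) in the statement, so that the hypothesis can be used in the extended contexts of the branches.

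The three cases go as follows.
\begin{itemize}
  \item \textbf{Base case.} Here \(G(t)\) holds, which literally says that \(t\) is covered by \(F\). There is nothing more to show.
  \item \textbf{\(\rai\) case.} Here \(t = \rai n\) with \(\infer{\Gamma}{n}{\bot}\). The \(\rai\) constructor applies to the very same \(n\), so \(t\) is covered by \(F\).
  \item \textbf{\(\iteop\) case.} Here \(t = \ite{n}{b_l}{b_r}\) with \(\infer{\Gamma}{n}{A_1 + A_2}\), and \(b_l\), \(b_r\) are covered by \(G\) in \(\Gamma.(x:A_1)\) and \(\Gamma.(x:A_2)\) respectively. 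The induction hypotheses say that \(b_l\) and \(b_r\) are covered by \(F\) in those contexts. We then rebuild \(t\) with the \(\iteop\) constructor, reusing the same neutral scrutinee \(n\).
\end{itemize}

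No step is a real obstacle; the only care needed is in how the induction is set up. Covering is parametrised by a predicate over all contexts, and the recursive occurrences live in extended contexts. So the induction must be on the covering derivation generalised over \(\Gamma\), rather than on \(t\) with \(\Gamma\) fixed. In \Rocq this means using the generated induction principle for the covering inductive, with the motive ``is covered by \(F\)''.

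No stability properties of \(F\), such as closure under renaming, are required, because the predicate is never transported along a substitution here. This is what makes the lemma a pure ``monad multiplication'', i.e.\ the flattening of a case tree whose leaves are themselves case trees.
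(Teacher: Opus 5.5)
Your proposal is correct and takes essentially the approach the paper relies on. The paper gives no written proof, only the link to \texttt{join\_covered} in the formalisation, and the natural argument there is exactly yours: induct on the outer covering derivation, generalised over the context. The base case is immediate, and the raise and if cases are rebuilt with the same constructors, using the induction hypotheses for the branches.
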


At negative types, however, things are not that simple. Consider the function type:
we assume that at each leaf of the cover we have terms which, when applied to
a value, yield a reducible result, and need to deduce that the whole covered term
is itself reducible, that is, yields reducible result when applied to a value.
The first ingredient is the following lemma, which pushes application
to the leaves of a cover by virtue of commuting conversions.
We omit similar lemmas for the other eliminations.
% : projections, \(\rai\) and \(\iteop\).

\begin{lemma}[\formline{MetaTheory.v}{318}{app_cover} Covered application]
  \label{lem:push-app}
  If \(f \in \Tm(\Gamma,A\to B)\) is covered by \(F\) and \(u \in \Tm(\Gamma,A)\),
  then \(\red{f~u}{t}\) for some \(t\) which is covered by the predicate
  \[\Delta,t' \mapsto \exists (\rho : \Ren(\Delta,\Gamma)) (f' : \Tm(\Delta,A \to B)).
  (t' = f'~\subs{u}{\rho}) \wedge (F~f')\]
\end{lemma}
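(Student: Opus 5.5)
We argue by induction on the derivation that \(f\) is covered by \(F\), proving the statement for all contexts \(\Gamma\) and all arguments \(u \in \Tm(\Gamma,A)\) at once. This generality is needed because the branches of a case tree live in extended contexts, so the induction hypothesis will be applied to the weakened argument \(\subs{u}{\wk}\). The renaming \(\rho\) in the target predicate exists precisely to record these accumulated weakenings.

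In the base case \(F(f)\) holds. We take \(t \coloneqq f~u\) itself, with zero reduction steps. Then \(t\) is covered by the target predicate through its first clause, witnessed by \(\rho \coloneqq \id\) and \(f' \coloneqq f\), using \(\subs{u}{\id} = u\).

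In the case \(f = \rai n\) with \(\infer{\Gamma}{n}{\bot}\), the rule \nameref{rule:raise-b} is not the relevant one. Instead, the commuting conversion \textsc{CommRais} with \(e = \sapp{u}\) gives \(\ored{(\rai n)~u}{\rai n}\). The result \(\rai n\) is covered by the second clause of the definition of covering.

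In the case \(f = \ite{n}{b_l}{b_r}\) with \(\infer{\Gamma}{n}{A_1 + A_2}\) and both branches covered by \(F\), we apply \textsc{CommIf}:
\[\ored{(\ite{n}{b_l}{b_r})~u}{\ite{n}{(b_l~\subs{u}{\wk})}{(b_r~\subs{u}{\wk})}}.\]
Here the elimination is weakened when pushed under the binder, as in the critical-pair diagram in the proof of \cref{thm:confluence}.

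The induction hypothesis, applied to \(b_l\) in \(\Gamma.(x:A_1)\) with argument \(\subs{u}{\wk}\), gives \(\red{b_l~\subs{u}{\wk}}{t_l}\). Here \(t_l\) is covered by the predicate with witnesses \(\rho_l\) and \(f'\) such that leaves have the form \(f'~\subs{\subs{u}{\wk}}{\rho_l}\). We set \(\rho \coloneqq \wk \pcomp \rho_l\), a renaming from the leaf context to \(\Gamma\), and use the substitution law \(\subs{\subs{u}{\wk}}{\rho_l} = \subs{u}{\wk \pcomp \rho_l}\). Each leaf then has the form required by the original predicate. We treat \(b_r\) the same way to obtain \(t_r\).

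By congruence of reduction under \(\iteop\), \(\red{(\ite{n}{b_l}{b_r})~u}{\ite{n}{t_l}{t_r}}\). The term \(\ite{n}{t_l}{t_r}\) is covered by the third clause, since \(n\) is still a neutral of sum type.

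The main obstacle is making the predicate transport along the extra weakening cleanly. The induction hypothesis produces a predicate stated relative to \(\Gamma.(x:A_i)\) and \(\subs{u}{\wk}\), whereas the goal is stated relative to \(\Gamma\) and \(u\). I would first isolate a monotonicity lemma: if every term satisfying \(F_1\) satisfies \(F_2\), then covered by \(F_1\) implies covered by \(F_2\). This lemma is immediate by induction on the cover. With it, the inner case only needs the renaming composition equation above, which is a standard substitution-calculus fact. The remaining work is bookkeeping of de Bruijn indices, which the formalisation's substitution API handles.
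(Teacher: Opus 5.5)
Your proof is correct and matches the argument the paper intends; the paper itself only points to the formalisation. You proceed by induction on the cover, generalised over the context and the argument, using \textsc{CommRais} to discard the application at \(\rai\) nodes and \textsc{CommIf} to push the weakened argument \(\subs{u}{\wk}\) into the branches, and the existential renaming in the predicate absorbs the weakenings via monotonicity of covering and the law \(\subs{\subs{u}{\wk}}{\rho_l} = \subs{u}{\wk \pcomp \rho_l}\).
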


Because pushing argument to the leaves introduces extra variables
in the context, we must also know that values are preserved by context extension.
This is what necessitates the Kripke-style
quantification over renamings in the definition of values at function
types. For the category theorist, this should be
unsurprising: for values to form a presheaf over
renamings, the definition of the semantic function type mirrors
the presheaf exponential.

\begin{lemma}[\formline{MetaTheory.v}{235}{ren_value}]
  If \(v\) is a value of type \(A\) in \(\Gamma\) and \(\rho \in \Ren(\Delta,\Gamma)\),
  then \(\subs{v}{\rho}\) is a value.
\end{lemma}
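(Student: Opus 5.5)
The plan is to prove this by induction on the type \(A\), following the clauses of the definition of values. The induction needs two auxiliary facts that are also proven by renaming stability. The first is that normal forms, neutrals, and coverings are stable under renaming. The second is that reduction is stable under renaming, so reducibility is too. For the syntactic part, we proceed by mutual induction on the derivations \(\check{\Gamma}{t}{A}\) and \(\infer{\Gamma}{t}{A}\). Each rule of \cref{fig:bidir} is clearly preserved by renaming. The only rules that touch the context are \nameref{rule:var-b}, where \(\rho\) sends a variable of type \(A\) to a variable of the same type, and the binders in \nameref{rule:lam-b} and \nameref{rule:if-b}, where we use \(\Up \rho\). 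For reduction, we check that each rule of \cref{fig:stlc-red} commutes with renaming. This needs the standard substitution lemma \(\subs{(\subs{t}{u..})}{\rho} = \subs{(\subs{t}{\Up\rho})}{(\subs{u}{\rho})..}\), and the observation that zipping commutes with renaming. Closure under congruence and reflexive-transitive closure is then immediate.

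For coverings, we prove a general lemma by induction on the covering derivation. If \(t\) is covered by \(F\), and \(F\) is stable under renaming (meaning \(F(\Gamma,u)\) implies \(F(\Delta,\subs{u}{\rho})\)), then \(\subs{t}{\rho}\) is covered by \(F\). The \(\rai\) and \(\iteop\) nodes use stability of neutrals, and in the branches we use \(\Up\rho\).

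The main induction on \(A\) then goes as follows.
\begin{itemize}
\item At a positive type \(P\), we need that raw values are stable under renaming. This is itself a nested induction on \(P\): neutrals are stable, and \(\inl a\) and \(\inr b\) use the outer induction hypothesis on the smaller components. The covering lemma then finishes the case. Because the covering branches live in extended contexts, the statement must be quantified over all contexts, and raw values must be defined and handled at arbitrary contexts.
\item At \(\top\), stability of normal forms suffices.
\item At \(A_1\times A_2\), the term \(\subs{v}{\rho}\) is normal. We have \(\proj{i}{(\subs{v}{\rho})} = \subs{(\proj{i}{v})}{\rho}\), and \(\proj{i}{v} \redop w\) for a value \(w\). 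Renaming that reduction gives \(\subs{(\proj{i}{v})}{\rho} \redop \subs{w}{\rho}\), and \(\subs{w}{\rho}\) is a value by the induction hypothesis at \(A_i\).
\item At \(A\to B\), no induction hypothesis is needed, thanks to the Kripke quantification. Given \(\rho' \in \Ren(\Delta',\Delta)\) and a value \(u\), we have \(\subs{(\subs{v}{\rho})}{\rho'}~u = \subs{v}{(\rho\pcomp\rho')}~u\). Since \(\rho\pcomp\rho'\) is a renaming, this is reducible by the hypothesis on \(v\). Normality of \(\subs{v}{\rho}\) comes from the syntactic lemma.
\end{itemize}

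I expect the main obstacle to be bookkeeping rather than ideas. The delicate points are the substitution-calculus equations (renaming commuting with \(u..\), with zip, and composition of renamings), and setting up the positive case so that the covering lemma applies with a predicate that is genuinely context-polymorphic. In the formalisation I would first try to obtain these equations from the generated substitution library, using automation such as \texttt{asimpl}.
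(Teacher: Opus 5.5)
Your proof is correct and follows the route the paper takes. The paper only hints at it, remarking that the Kripke quantification over renamings at function types is what makes renaming-stability work; your version fills this in as induction on the type, using renaming-stability of normal and neutral forms, of covers by a renaming-stable predicate, and of reduction, with the arrow case closed by composing renamings and no induction hypothesis. One small simplification: raw values at \(A + B\) are defined in terms of values (not raw values) at \(A\) and \(B\), so the positive case needs only a case analysis on the type plus the outer induction hypothesis, not a nested induction.
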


Putting these together, we finally obtain:

\begin{lemma}[\formline{MetaTheory.v}{466}{covered_reducible} Reducible cover]
  \label{lem:red-cover}
  A term covered by reducible terms is reducible.
\end{lemma}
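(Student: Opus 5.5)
We prove that a term \(t\) covered by reducible terms (at type \(T\)) is reducible by induction on the type \(T\), generalising over the context and the term, and inside each case by induction on the covering derivation. A preliminary observation is used throughout: if \(t\) is covered and a leaf reduces, then \(t\) reduces accordingly, because reduction is congruent under \(\rai\) and \(\iteop\). Consequently, a term covered by reducible terms reduces to a term covered by values. Concretely, we replace each leaf \(F(t)\) by its value \(v\) with \(\lred{t}{v}\), and lift these reductions through the case tree by congruence. It therefore suffices to show that a term covered by values is reducible, indeed that it is itself a value.

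\emph{Positive types.} A value is, by definition, covered by raw values. A term covered by values is thus covered by terms covered by raw values, hence covered by raw values by the Multiplication lemma. So it is a value, and therefore reducible.

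\emph{Type \(\top\).} A value is a normal form, and a term covered by normal forms is a normal form: the \(\rai\) and \(\iteop\) nodes are built over neutrals, matching the \nameref{rule:raise-b} and \nameref{rule:if-b} rules. This is the same fact already used in Reification.

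\emph{Negative types \(A \to B\) and \(A_1 \times A_2\).} The covered term \(w\) is normal by the argument just given. It remains to show that the observations are reducible; we treat functions, products being analogous using the projection counterpart of Covered application. Let \(\rho \in \Ren(\Delta,\Gamma)\) and let \(u\) be a value at \(A\) in \(\Delta\). Renaming preserves the cover structure (neutrals are stable under renaming) and preserves values by the renaming lemma, so \(\subs{w}{\rho}\) is covered by values. Covered application (\cref{lem:push-app}) gives \(\red{\subs{w}{\rho}~u}{t}\) with \(t\) covered by terms \(f'~\subs{u}{\rho'}\), where \(f'\) is a value at \(A\to B\) and \(\rho'\) is a renaming. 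Each such leaf is reducible: \(\subs{u}{\rho'}\) is a value by the renaming lemma, and taking the identity renaming in the definition of value for \(f'\) shows \(f'~\subs{u}{\rho'}\) is reducible. Hence \(t\) is covered by reducible terms at type \(B\). By the induction hypothesis on \(B\), \(t\) is reducible, and since \(\red{\subs{w}{\rho}~u}{t}\), so is \(\subs{w}{\rho}~u\). Therefore \(w\) is a value.

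The main obstacle is this last case. We must make sure the induction hypothesis really applies at the smaller type \(B\) (resp. \(A_i\)), uniformly over all contexts, since Covered application extends the context at the leaves. We must also correctly combine the renamings arising from the cover with those in the Kripke quantification. Stating the induction hypothesis as "for all \(\Gamma\) and all terms covered by reducible terms at \(B\)" handles both issues.
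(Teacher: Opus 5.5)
Your proposal is correct and follows essentially the same route as the paper: induction on the type, with positive types handled by the Multiplication lemma, and negative types handled by pushing the elimination to the leaves via Covered application (and its projection analogue), together with stability of values under renaming and the induction hypothesis at the smaller type. Your explicit first step, reducing a cover of reducible terms to a cover of values by congruence, and your separate treatment of \(\top\) are just details the paper leaves to the formalisation.
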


With this in hand, the rest of the proof is
history. We let interested reader consult the formalisation.
An interesting point to note, though, is that the analogue of \cref{lem:push-app}
for \(\rai\) and \(\iteop\) is not used to prove \cref{lem:red-cover}, but 
to show the semantic typing rule for
the eliminators: owing to the polarity of the types, the work for
negative types is for constructors --~where we need \cref{lem:red-cover}~--
while eliminations are the hard part for positive types.

\begin{theorem}[\formline{MetaTheory.v}{682}{fundamental} Fundamental lemma]
  If \(\typing{\Gamma}{t}[A]\) then \(\semty{\Gamma}{t}[A]\).
\end{theorem}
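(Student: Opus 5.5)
We proceed by induction on the typing derivation \(\typing{\Gamma}{t}[A]\), proving for each rule of \cref{fig:stlc} a semantic counterpart. Fix a value substitution \(\sigma \in \Sub(\Delta,\Gamma)\); we must show \(\subs{t}{\sigma}\) reducible. Two generic facts are used throughout. First, reducibility is closed under anti-reduction, since if \(\red{t}{t'}\) and \(\lred{t'}{v}\) then \(\lred{t}{v}\) by transitivity. Second, semantic typing is stable under renaming of \(\Delta\), by the lemma that values are preserved by renamings. Combining the latter with \cref{lem:reflect} shows that \(\Up\sigma\) and \((\sigma\pcomp\rho),u\) are value substitutions whenever \(u\) is a value. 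This is what lets us pass under binders.

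The cases split along polarity. \nameref{rule:var-b} holds because \(\sigma(x)\) is a value, and \textsc{Cst} holds because constants are neutral, hence values by \cref{lem:reflect}. \textsc{Star} is immediate. For the positive constructors \textsc{InL}/\textsc{InR}, the induction hypothesis gives \(\red{\subs{a}{\sigma}}{v}\) with \(v\) a value, and then \(\inl v\) is a raw value, hence covered by raw values. For the negative constructors we check the clauses of the value definition directly.
\begin{itemize}
\item \textsc{Pair}: \((\subs{t}{\sigma},\subs{u}{\sigma})\) reduces to \((v_1,v_2)\), which is normal by reification. Its projections reduce in one step to the values \(v_i\).
\item \textsc{Lam}: \(\l x.\subs{t}{\Up\sigma}\) is reducible because the induction hypothesis at the identity-like substitution \(\Up\sigma\) (using reflection on the fresh variable) gives a value \(w\) with \(\lred{\subs{t}{\Up\sigma}}{w}\). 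Then \(\l x.w\) is normal. For a renaming \(\rho\) and value \(u\), \((\l x.\subs{w}{\Up\rho})~u\) is reduced from \((\l x.\subs{t}{\Up(\sigma\pcomp\rho)})~u \ored \subs{t}{(\sigma\pcomp\rho),u}\), which is reducible by the induction hypothesis. Confluence (\cref{thm:confluence}) lets us transfer reducibility back along the reduction to \(w\). This step is delicate; we may instead strengthen values at arrow type to quantify over \(\subs{t}{\sigma}\) directly.
\end{itemize}

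The negative eliminations \textsc{App} and \textsc{Proj} follow from the definition of values, after instantiating the renaming with the identity for \textsc{App}.

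The main obstacle is the positive eliminations \textsc{Raise} and \textsc{If}. For \textsc{If}, the scrutinee reduces to a value \(s'\) covered by raw values, and we proceed by induction on that cover.
\begin{itemize}
\item At a leaf \(\inl a\), a \(\beta\)-step yields \(\subs{b_l}{\Up\sigma}\) substituted with \(a\), which equals \(\subs{b_l}{\sigma,a}\) and is reducible by the induction hypothesis.
\item At a neutral leaf \(n\), we reduce each branch under \(\Up\sigma\), which is a value substitution by reflection. This gives \(\ite{n}{w_l}{w_r}\), covered by values.
\item At an inner node \(\ite{m}{c_l}{c_r}\), the commuting conversion \textsc{CommIf} pushes the outer \(\iteop\) into the branches. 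This is the analogue of \cref{lem:push-app} for \(\iteop\). Each pushed branch lives in an extended context, so the branches must be renamed by \(\wk\), which is why the induction must be generalised over renamings.
\item At a \(\rai\) leaf, \textsc{CommRais} applies.
\end{itemize}
In every case we end with a term covered by reducible terms, which is reducible by \cref{lem:red-cover}.

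\textsc{Raise} is handled the same way. Its argument is covered by neutrals of type \(\bot\), and pushing \(\rai\) through the cover via commuting conversions produces a cover whose leaves are of the form \(\rai n\), which are themselves covered. Getting the cover induction and the renaming bookkeeping right in the \textsc{If} case is where we expect most of the effort.
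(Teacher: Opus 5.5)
Your proof is correct and follows the paper's sketched argument: induction on the typing derivation. Positive constructors and negative eliminations are read off from the definition of values. The positive eliminations \textsc{If}/\textsc{Raise} are handled by pushing the eliminator to the leaves of the scrutinee's cover (the analogues of \cref{lem:push-app} for \(\iteop\) and \(\rai\), generalised over renamings) and concluding with \cref{lem:red-cover}. In \textsc{Lam}, say explicitly that the common reduct of \(w[\rho,u]\) and the value \(v'\) of \(t[(\sigma\pcomp\rho),u]\) must be \(v'\) itself, since \(v'\) is normal by reification and hence irreducible by the normality lemma; note too that this case rests on \cref{thm:confluence}, which the paper proves only on paper, not in \Rocq.
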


From \cref{lem:reflect} we get that variables are values at all types, and
thus that the identity substitution is a value substitution. Hence, we conclude

\begin{corollary}[\formline{MetaTheory.v}{718}{normalisation} Normalisation]
  Any well-typed term reduces to a normal form.
\end{corollary}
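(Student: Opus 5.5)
The corollary follows by instantiating the fundamental lemma with the identity substitution and then using reification. Suppose \(\typing{\Gamma}{t}[A]\). The fundamental lemma gives \(\semty{\Gamma}{t}[A]\): for every value substitution \(\sigma \in \Sub(\Delta,\Gamma)\), the term \(\subs{t}{\sigma}\) is reducible at \(A\) in \(\Delta\). We will take \(\Delta \coloneqq \Gamma\) and \(\sigma \coloneqq \id\).

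The first step is to show that \(\id\) is a value substitution. For each \((x : A') \in \Gamma\) we need \(\id(x) = x\) to be a value at \(A'\) in \(\Gamma\). By \nameref{rule:var-b}, \(x\) is a neutral, \(\infer{\Gamma}{x}{A'}\). \Cref{lem:reflect} then makes it a value. No induction is needed here, since reflection already holds at every type.

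Next, we apply semantic typing to \(\id\), so \(\subs{t}{\id}\) is reducible. We rewrite \(\subs{t}{\id} = t\) using the substitution calculus. In the de Bruijn formalisation this is the standard identity law, available as a simplification lemma. Unfolding reducibility gives a value \(v\) with \(\red{t}{v}\). By the reification lemma, \(v\) is a normal form, \(\check{\Gamma}{v}{A}\). Hence \(t\) reduces to a normal form.

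If one also wants "normal form" in the sense of "no further one-step reduction," the Normality lemma gives this directly from \(\check{\Gamma}{v}{A}\). There is no real obstacle: all the work sits in the fundamental lemma and in reflection and reification. The only points to check are the identity law \(\subs{t}{\id} = t\) and that the notion of "value substitution" used in semantic typing matches pointwise values, which holds by definition.
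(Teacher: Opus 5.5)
Your proposal is correct and follows the paper's argument: reflection makes every variable a value, so the identity substitution is a value substitution, and instantiating the fundamental lemma with it, unfolding reducibility and applying reification shows that \(t\) reduces to a normal form. The paper gives this in a single line and leaves implicit the identity law for substitution and the final reification step, both of which you spell out.
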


\section{Interpolation}
\label{sec:interpolation}

With our characterisation of normal forms in hand, we can move to the
interpolation result. Following \Cubric~\cite{Cubric1994}, we give a
polarised version in the fashion of Lyndon interpolation \cite{Lyndon1959interpolation},
% rather than the simpler Craig interpolation,
and thus start with a polarised notion of vocabulary.

\begin{definition}[\formline{Languages.v}{9}{atoms_ty} Vocabulary]
  Let \(A \in \Ty_{\mathcal{B}}\). We define \(A^+ \subseteq \mathcal{B}\)
    (resp.\ \(A^-\)), the \emph{positive}
    (resp.\ \emph{negative}) \emph{vocabulary} of \(A\) by induction,
    with $\overline{+}\coloneq -$ (resp. $\overline{-} \coloneq+$):\\[-1.5em]
  \begin{minipage}[t]{0.15\textwidth}
  \begin{align*}
    b^{+} & \coloneq \{b\} \\
    b^{-} & \coloneq \emptyset
  \end{align*}
  \end{minipage}
  \begin{minipage}[t]{0.15\textwidth}
  \begin{align*}
    \top^p & \coloneq \emptyset \\
    \bot^p & \coloneq \emptyset \\
  \end{align*}
  \end{minipage}
  \begin{minipage}[t]{0.25\textwidth}
  \begin{align*}
    (A + B)^{p} & \coloneq A^p \cup B^p \\
    (A \times B)^{p} & \coloneq A^p \cup B^p \\
  \end{align*}
  \end{minipage}
  \begin{minipage}[t]{0.2\textwidth}
  \begin{align*}
    (A \to B)^{p} & \coloneq A^{\overline{p}} \cup B^p \\
  \end{align*}
  \end{minipage}\\
This is lifted pointwise to contexts as
  \((\Gamma.(x : A))^p \coloneq \Gamma^p \cup A^p\).
\end{definition}

Although stating the final theorem does not need it, in the proof by induction
it is necessary to keep track of which variables in the context are ``coming from''
the conclusion, and thus we need to maintain a context partition in the following sense.

\begin{definition}[\formline{Interpolation.v}{22}{splits} Context partition]
  \emph{Context partition} \(\Gamma = \part{\Gamma_{1}}{\Gamma_{2}}\)
  is defined by
  \begin{mathpar}
    \inferdef{ }{\emp = \part{\emp}{\emp}} \and
    \inferdef{\Gamma = \part{\Gamma_{1}}{\Gamma_{2}}}{
      \Gamma. (x : A) = \part{\Gamma_{1}.(x : A)}{\Gamma_{2}}} \and
    \inferdef{\Gamma = \part{\Gamma_{1}}{\Gamma_{2}}}{
      \Gamma. (x : A) = \part{\Gamma_{1}}{\Gamma_{2}. (x : A)}}
  \end{mathpar}
\end{definition}

In a named setting this is a mere relation: as long as we enforce that variable
names in contexts are unique, they tell us how the partitioning is done.
In the formalisation, where we use a nameless syntax, context partition becomes
structure: there are two proofs of \(A.A = \part{A}{A}\), respectively corresponding,
in a named setting, to \((x:A).(y:A) = \part{(x:A)}{(y:A)}\) and
\((x:A).(y:A) = \part{(y:A)}{(x:A)}\). We spare the reader the substitution
calculus arising from this, \eg the fact that if \(\Gamma = \part{\Gamma_1}{\Gamma_2}\)
then there is a renaming \(\typing{\Gamma}{\rho}[\scol{\Gamma_1}]\).

\subsection{Constant-free theorem}

Our first theorem is constant-free: for
this section we fix a language \(\lang = (\base,\emptyset,\_)\).
It is expressed using an existential, but in the formalisation
we have a proper function (using \textsc{Equations} \cite{Sozeau2019})
computing the interpolating type and terms, which we prove correct.
This means we can nicely separate independent parts of the correctness
proof: the vocabulary restriction and equational constraint on the interpolating
terms are proven independent.

\begin{theorem}[\formline{Theorems.v}{79}{interpolation_empty}
    Interpolation, constant-free]
  Assume \(\typing[\lang]{\Gamma}{t}[T]\), then there exists a type \(M\) and terms
  \(\typing{\Gamma}{l}[M]\) and \(\typing{(x:M)}{r}[T]\) such that
  \begin{itemize}
  \item \(M\) uses only the common vocabulary of \(\Gamma\) and \(T\):
    for any \(p \in \{+,-\}\) \(M^p \subseteq \Gamma^p \cap T^p\);
  \item \(t\) is the composition of \(r\) and \(l\): \(\conv{\Gamma}{t}{\subs{r}{l..}}[T]\).
  \end{itemize}
\end{theorem}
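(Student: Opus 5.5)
First I reduce to normal forms. By the Normalisation corollary, \(t\) reduces to some \(t_0\) with \(\check{\Gamma}{t_0}{T}\). Subject reduction gives \(\conv{\Gamma}{t}{t_0}[T]\), so it suffices to interpolate \(t_0\). I then prove a stronger, partitioned statement by mutual induction on the bidirectional derivations of \cref{fig:bidir}. Fix a partition \(\Gamma = \part{\scol{\Gamma_1}}{\tcol{\Gamma_2}}\).

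For \emph{checking}: if \(\check{\Gamma}{t}{T}\), there is an interpolant \(M\) with \(M^p \subseteq \Gamma_1^p \cap (\Gamma_2^{\overline p}\cup T^p)\), and there are terms \(\typing{\Gamma_1}{l}[M]\) and \(\typing{\Gamma_2.(x:M)}{r}[T]\) such that \(\red{\subs{r}{l..}}{t}\), up to the partition renamings.

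For \emph{inference}: if \(\infer{\Gamma}{n}{A}\) and the head variable of \(n\) lies in \(\Gamma_1\), there are \(M\), \(l : M\) over \(\Gamma_1\), and \(r\) over \(\Gamma_2.(x:M)\) of type \(A\), with the dual vocabulary condition \(M^p\subseteq \Gamma_1^p\cap(\Gamma_2^{\overline p}\cup A^{\overline p})\) in the style of Prawitz/Maehara. There is also a symmetric statement for heads in \(\Gamma_2\): the head of a neutral determines on which side its type lives (subformula property), and \(A\) is then a subformula of that side's context, possibly with swapped polarity. The theorem is the instance \(\Gamma_1=\Gamma\), \(\Gamma_2=\emp\). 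Here the condition reads \(M^p\subseteq\Gamma^p\cap T^p\), and I weaken reduction to conversion.

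The cases are as follows.
\begin{itemize}
\item \emph{Constructors} (Star, Pair, Lam, InL, InR). I take products of interpolants (\(\top\) for Star). Lam extends \(\Gamma_2\) with \(x\), since the bound variable comes from the target. InL and InR reuse the sub-interpolant.
\item \emph{Demote, head in \(\Gamma_1\)}. I use the neutral's interpolant directly.
\item \emph{Demote, head in \(\Gamma_2\)}. I must export what the source provides to the arguments: the App arguments are checking terms, which I interpolate with the partition swapped roles. I pair their interpolants (product), and \(r\) rebuilds the spine.
\item \emph{Var}. If \(x\in\Gamma_1\), then \(M=A\), \(l=x\), \(r=x\). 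If \(x\in\Gamma_2\), then \(M=\top\).
\item \emph{App}. For \(n~u\) with head in \(\Gamma_1\), the argument \(u\) is checked against a type coming from \(\Gamma_1\) but may use \(\Gamma_2\). I interpolate \(u\) in the swapped partition, yielding \(N\) with \(l_u\) over \(\Gamma_2\) and \(r_u\) over \(\Gamma_1.(y:N)\). The new interpolant is \(N\to M_n\). On the left, \(l=\lambda y.\,(\text{the }l\text{ of }n\text{ applied to }r_u)\), obtained by induction on \(n\) in context \(\Gamma_1.(y:N)\) via weakening. On the right, \(r = r_n[x~l_u]\). This polarity flip is exactly why the vocabulary conditions must be stated with \(\overline p\).
\item \emph{Proj}. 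Immediate.
\item \emph{Raise}. If the neutral's head is in \(\Gamma_1\), then \(M=\bot\), \(l\) is the interpolated \(\bot\)-neutral, and \(r=\rai x\). If the head is in \(\Gamma_2\), I stay on the target side with \(M=\top\).
\item \emph{If with scrutinee head in \(\Gamma_1\)}. I obtain \(M_s, l_s, r_s\) for the scrutinee. The branches are interpolated with the bound variable placed in \(\Gamma_1\), giving \(M_l,M_r\). The interpolant is \(M_l + M_r\), built as \(l=\ite{s}{\inl l_l}{\inr l_r}\) with \(r\) a case on \(x\). But \(s\) may use \(\Gamma_2\) through its arguments, so I first combine with the scrutinee's interpolant: I take \(M_s\to(M_l+M_r)\)-shaped data when the scrutinee has \(\Gamma_2\)-arguments, mirroring the App case.
\item \emph{If with scrutinee head in \(\Gamma_2\)}. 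Dually, the interpolant is a product of the branch interpolants (with the bound variable in \(\Gamma_2\)) and the scrutinee's exported interpolant, and \(r\) does the case analysis.
\end{itemize}

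For the \(\iteop\) and App cases, the composite \(\subs{r}{l..}\) reduces to \(t\) only via \(\beta\) together with commuting conversions. For instance, \(\ite{x}{\dots}{\dots}\) substituted by an \(\iteop\) needs CommIf followed by InL/InR \(\beta\). So each case needs a small calculation, relying on stability of reduction under substitution.

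I expect the main obstacle to be precisely these positive-elimination cases. Their scrutinee is a neutral whose arguments straddle the partition, which is the case \Cubric{} dismissed. I would first try to factor out a single generic lemma: given a neutral spine split by the partition, it produces a ``spine interpolant'' \(N\to M\) (or a product). With such a lemma, App, Demote, Raise and If all reuse it, and the bookkeeping of the vocabulary inclusions becomes uniform.
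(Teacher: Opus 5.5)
Your overall architecture is the paper's: normalisation plus subject reduction, then a mutual induction on the bidirectional derivations over a partitioned context. You use the same invariant for checking terms, Lam moves its variable into \(\Gamma_2\), and the arguments of spines headed in \(\Gamma_1\) are interpolated with the partition swapped. However, the invariant you state for neutrals whose head lies in \(\Gamma_1\) does not work, and the Var, Proj and App cases built on it fail.

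\begin{itemize}
\item \emph{Var.} Taken literally, Var already violates your bound. For \(x : b\) in \(\Gamma_1\) and \(\Gamma_2 = \emp\), taking \(M = b\) needs \(b \in \Gamma_2^{-} \cup b^{-} = \emptyset\).
\item \emph{Proj.} Even with \(A^{p}\) in place of \(A^{\overline p}\), exporting the head's type is not stable under eliminations. For \(p : b \times c\) and \(\proj{1}{p}\) checked at \(b\), your Var/Proj/Demote chain returns \(M = b \times c\), yet \(c \notin T^{+} = \{b\}\).
\item \emph{App.} App does not typecheck. The term \(\subs{r_n}{(x~l_u)..}\) has the type \(B \to C\) of \(r_n\), not \(C\). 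Also \(l_n : M_n\) need not be a function that can be applied to \(r_u\). With an opaque triple \((M_n, l_n, r_n)\) oriented from \(\Gamma_1\) to \(\Gamma_2\), you cannot extend a spine by one more argument.
\end{itemize}

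The missing idea is to \emph{reverse} the orientation for neutrals headed in \(\Gamma_1\). Produce \(l \in \Tm(\Gamma_2, M)\) and \(r \in \Tm(\Gamma_1.(x:M), A)\), with a bound \(M^p \subseteq \Gamma_1^{\overline p} \cap \Gamma_2^{p}\) that does not mention \(A\). Also carry the invariant \(A^p \subseteq \Gamma_1^p\), with the same polarity, not ``possibly swapped''.
\begin{itemize}
\item App is then a plain product \(M_n \times N\) with \(r = \subs{r_n}{\proj{1}{z}..}~\subs{r_u}{\proj{2}{z}..}\). The bound on \(N\) follows from \(B^p \subseteq \Gamma_1^{\overline p}\).
\item Proj is post-composition.
\item The arrow \(M \to A\) is formed only when the neutral phase ends, at Demote, taking \(\lambda x.\,r\) over \(\Gamma_1\) and \(z~l\) over \(\Gamma_2.(z : M \to A)\). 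Here \(A\) is the type actually checked. Raise uses \(M \to \bot\) analogously, and If uses your \(M_s \to (M_l + M_r)\).
\end{itemize}
This is exactly the spine lemma you hoped to factor out, but it has to be the induction hypothesis itself.

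For heads in \(\Gamma_2\), the checking orientation is right, with bound \(M^p \subseteq \Gamma_1^p \cap \Gamma_2^{\overline p}\) and \(A^p \subseteq \Gamma_2^p\). Two of your cases are still off.
\begin{itemize}
\item The arguments of a \(\Gamma_2\)-headed spine must be interpolated \emph{without} swapping. Swapping puts \(r_u\) over \(\Gamma_1\), out of reach of \(r\).
\item Raise must keep the neutral's interpolant and post-compose with \(\rai\). Choosing \(M = \top\) discards what the arguments take from \(\Gamma_1\), as in \(\rai(g~y)\) with \(y : b\) in \(\Gamma_1\) and \(g : b \to \bot\) in \(\Gamma_2\).
\end{itemize}
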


Since by normalisation every term is equal to a normal form, it is enough
to show the following theorem for normal forms,
and specialize it to the trivial partition \(\Gamma = \part{\Gamma}{\emp}\).

\begin{theorem}[\formline{Interpolation.v}{352}{Interpolation}
    Interpolation, inductive statement]
    \label{thm:interpolation-ind}
  Assume a partitioned context \(\Gamma = \part{\Gamma_{s}}{\Gamma_{t}}\).
  Given any normal form \(\ccol{t} \in \Nf(\Gamma,\tcol{T})\),
  there is a type \(\mcol{M}\) and terms
  \(\scol{l} \in \Tm(\scol{\Gamma_{s}},\mcol{M})\) and
  \(\tcol{r} \in \Tm(\tcol{\Gamma_{t}}.\mcol{(x : M)}, \tcol{T})\)
  such that \(\conv{\Gamma}{\subs{\tcol{r}}{\scol{l}..}}{\ccol{t}}[\tcol{T}]\)
  and for all \(p\),
    \(\mcol{M}^p \subseteq \scol{\Gamma_{s}}^p \cap (\tcol{\Gamma_{t}}^{-p} \cup \tcol{T}^p)\).
  Moreover, given any neutral term \(\icol{t} \in \Ne(\Gamma,T)\), either
  \begin{enumerate}
    \item ``\(\scol{T}\) is a source type'', \ie \(\scol{T}^p \subseteq \scol{\Gamma_{s}}^p\),
      and we have a type \(\mcol{M}\) and terms \(\tcol{l} \in \Tm(\tcol{\Gamma_{t}},\mcol{M})\) and
  \(\scol{r} \in \Tm(\scol{\Gamma_{s}}.\mcol{(x : M)}, \scol{T})\) such that
  \(\conv{\Gamma}{\subs{\scol{r}}{\tcol{l}..}}{\icol{t}}[\scol{T}]\) and
  \(\mcol{M}^p \subseteq \scol{\Gamma_{s}}^{-p} \cap \tcol{\Gamma_{t}}^p\).

    \item or ``\(\tcol{T}\) is a target type'', \ie
      \(\tcol{T}^p \subseteq \tcol{\Gamma_{t}}^p\), and we have a type \(\mcol{M}\)
      and terms \(\scol{l} \in \Tm(\scol{\Gamma_{s}},\mcol{M})\) and
      \(\tcol{r} \in \Tm(\tcol{\Gamma_{t}}.\mcol{(x : M)}, \tcol{T})\) such that
       \(\conv{\Gamma}{\subs{\tcol{r}}{\scol{l}..}}{\icol{t}}[\scol{T}]\)
       and \(\mcol{M}^p \subseteq \scol{\Gamma_{s}}^p \cap \tcol{\Gamma_{t}}^{-p}\).
  \end{enumerate}
\end{theorem}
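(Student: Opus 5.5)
We proceed by mutual induction on the bidirectional derivations of \cref{fig:bidir}, proving the normal-form statement and the neutral dichotomy together for all partitions \(\Gamma = \part{\Gamma_s}{\Gamma_t}\). The induction hypothesis must be general in the partition, because binders (\nameref{rule:lam-b}, \nameref{rule:if-b}) extend the context and we choose on which side the new variable goes. In every case the interpolant is built from the interpolants of the premises. The equation \(\conv{\Gamma}{\subs{r}{l..}}{t}\) then follows from \(\beta\)/\(\eta\) laws and congruence. The vocabulary inclusions follow from the inductive bounds together with the definition of \(A^p\).

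\textbf{Neutrals.} For \nameref{rule:var-b}, if \((x:A)\in\Gamma_s\) we are in case 1: \(T\) is a source type, and we take \(M=\top\), \(l=\st\), \(r=x\). The vocabulary bound is trivial since \(\top^p=\emptyset\). If \((x:A)\in\Gamma_t\) we are symmetrically in case 2.

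For \nameref{rule:proji-b}, we apply the induction hypothesis to \(p\) and keep the same side. Being a source (resp.\ target) type is inherited by \(A_i\). We reuse \(M\) and post-compose \(r\) with \(\pi_i\).

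For \nameref{rule:app-b}, with \(t~u\), the hypothesis on \(t\) splits into two cases.
\begin{itemize}
\item If \(A\to B\) is a source type, we apply the normal-form hypothesis to \(\ccol{u}\) with the partition swapped, \(\part{\Gamma_t}{\Gamma_s}\). This gives \(M_u\), \(l_u\in\Tm(\Gamma_t,M_u)\) and \(r_u\) over \(\Gamma_s.(y:M_u)\). We then take \(M = M_t\times M_u\), \(l=(l_t,l_u)\), and \(r = \subs{r_t}{\pi_1 x..}~\subs{r_u}{\pi_2 x..}\). For the vocabulary of \(M_u\), the swapped statement gives \(M_u^p\subseteq \Gamma_t^p\cap(\Gamma_s^{-p}\cup A^p)\), and \(A^p\subseteq (A\to B)^{-p}\subseteq\Gamma_s^{-p}\) because the function type is a source type. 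This yields exactly the case-1 bound.
\item If \(A\to B\) is a target type, we symmetrically use \(u\) with the unswapped partition and take \(M = M_u\to M_t\). Here the contravariance of \(\to\) matches the flip in the bounds.
\end{itemize}

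\textbf{Normal forms.} Constructors are handled by combining the interpolants of their premises.
\begin{itemize}
\item \nameref{rule:star-b} uses \(M=\top\).
\item \nameref{rule:pair-b} uses the product of the two interpolants.
\item \nameref{rule:inl-b} and \nameref{rule:inr-b} keep \(M\) and wrap \(r\).
\item For \nameref{rule:lam-b}, we put \(x:A\) in \(\Gamma_t\). The bound \(\Gamma_t^{-p}\cup B^p\) with \(A\) added on the target side becomes \(\Gamma_t^{-p}\cup A^{-p}\cup B^p \subseteq \Gamma_t^{-p}\cup(A\to B)^p\).
\end{itemize}
For \nameref{rule:demote-b} we use the neutral dichotomy.
\begin{itemize}
\item In case 2 the result transfers directly, since \(\Gamma_t^{-p}\subseteq \Gamma_t^{-p}\cup T^p\).
\item In case 1 the whole type \(T\) is a source type. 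We take \(T\) itself as interpolant, with \(l = t\) (typable in \(\Gamma_s\) up to the equation, namely \(\subs{r}{l..}\) weakened appropriately) and \(r=x\). Since \(T^p\subseteq\Gamma_s^p\cap T^p\), the bound holds.
\end{itemize}
Note that \(l=t\) is only legitimate if \(t\) lives over \(\Gamma_s\). So in case 1 we instead take \(M' = M_t\to T\) and \(l=\lambda y.r\) over \(\Gamma_s\), and \(r' = x~l_t\) over \(\Gamma_t\). The bound \((M_t\to T)^p = M_t^{-p}\cup T^p\) lies inside \(\Gamma_s^p\cap(\Gamma_t^{-p}\cup T^p)\) by the case-1 bound on \(M_t\).

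\nameref{rule:raise-b} is similar: in case 2 we use \(\rai r\); in case 1 we use \(M_t\to\bot\), or equivalently \(\bot\) itself when \(M_t\) is empty of vocabulary.

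\textbf{Main obstacle: \nameref{rule:if-b}.} We analyse the scrutinee \(\icol{s}:A+B\) with the dichotomy.
\begin{itemize}
\item In case 2, \(A+B\) is a target type. We place the bound variable in \(\Gamma_t\) for both branches, and take \(M = M_s\times M_l\times M_r\) with \(r\) doing \(\iteop\) over \(r_s\). The branch bounds mention \(A^{-p}\) and \(B^{-p}\), which are contained in \(\Gamma_t^{-p}\).
\item In case 1, \(A+B\) is a source type. We put the bound variable in \(\Gamma_s\) and obtain interpolants \(M_l\) and \(M_r\) for the branches. The interpolant is \(M = M_s\to(M_l+M_r)\), with \(l = \lambda y.\ite{\subs{r_s}{y..}}{\inl l_l}{\inr l_r}\), and \(r\) applies \(x\) to \(l_s\) and cases on the result into \(r_l\) and \(r_r\).
\end{itemize}
The delicate part is checking that the equation holds up to conversion. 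This requires commuting conversions and \(\beta\) for sums, rather than just congruence. The Lyndon bounds must also work out, since \(A,B\) enter the branch contexts positively and are controlled by \(A^p\cup B^p\subseteq\Gamma_s^p\). The symmetric \(\to\) trick for case 1 is what makes these bounds close, and I expect this \nameref{rule:if-b} case to dominate the effort.
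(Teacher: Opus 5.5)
Your overall plan is the paper's: mutual induction on the bidirectional derivations, generalised over the partition. You use the same constructions for \textsc{Var}, \textsc{Proj}, \textsc{App}/source, \textsc{Lam} and \textsc{Demote} (after your false start: \(M_t \to T\), \(l=\lambda y.r_t\), \(r = x~l_t\)). Your \textsc{If} constructions also check out: a triple product when \(A+B\) is a target type, and \(M_s \to (M_l + M_r)\) when it is a source type, the equation then following by \(\beta\) plus one commuting conversion. They spell out what the paper only calls ``similar to \textsc{Demote}''.

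The step that fails is \textsc{App}/target. There both induction hypotheses put the left interpolant over \(\Gamma_s\) and the right one over \(\Gamma_t\): \(l_t \in \mathrm{Tm}(\Gamma_s,M_t)\) and \(l_u \in \mathrm{Tm}(\Gamma_s,M_u)\), while \(r_t\) and \(r_u\) live over \(\Gamma_t.(z:M_t)\) and \(\Gamma_t.(y:M_u)\). With \(M = M_u \to M_t\), the right interpolant over \(\Gamma_t.(x : M_u\to M_t)\) has no term of type \(M_u\) to feed to \(x\) or to \(r_u\), so it cannot be built. The bound breaks as well: \((M_u \to M_t)^p \supseteq M_u^{-p}\), which the hypothesis only places in \(\Gamma_s^{-p}\), whereas case~2 demands \(\Gamma_s^{p}\cap\Gamma_t^{-p}\). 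For instance, take \(\Gamma_s = (a:b)\), \(\Gamma_t = (f : b\to c)\) and the neutral \(f~a\). Your other cases give \(M_t=\top\) and \(M_u = \top\to b\), so \(M = (\top\to b)\to\top\). But the context \(f : b \to c, x : (\top\to b)\to\top\) does not inhabit \(c\), and \(b \in M^-\) although \(\Gamma_s^- = \emptyset\).

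There is in fact no polarity flip to match here. Because \(A\to B\) is a target type, \(A^p \subseteq (A\to B)^{-p} \subseteq \Gamma_t^{-p}\). So the normal-form bound \(M_u^p \subseteq \Gamma_s^p \cap (\Gamma_t^{-p}\cup A^p)\) already collapses to \(\Gamma_s^p\cap\Gamma_t^{-p}\). The correct choice is the same product as in your source subcase: \(M = M_t\times M_u\), \(l = (l_t,l_u)\), \(r = r_t[\pi_1 z..]~(r_u[\pi_2 z..])\). This is exactly what the paper does.

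Two smaller corrections:
\begin{itemize}
\item In \textsc{Raise}/case~1, the aside ``or equivalently \(\bot\) itself when \(M_t\) has no vocabulary'' is false. The left interpolant would then have to inhabit \(\bot\) over \(\Gamma_s\) alone. Take \(\Gamma_s = (g:\bot\to\bot)\), \(\Gamma_t=(e:\bot)\) and the normal form \(\mathsf{raise}\,(g~e)\): here \(M_t\) is vocabulary-free, yet \(\Gamma_s\) does not inhabit \(\bot\). Keep \(M_t\to\bot\) with \(l=\lambda y.r_t\) and \(r=\mathsf{raise}\,(x~l_t)\).
\item The abandoned attempt \(l = t\) in \textsc{Demote} should simply be deleted.
\end{itemize}
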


The statement for normal forms is essentially \Cubric's \cite[Lemma 3.7]{Cubric1994}.
It is a generalisation of
the interpolation statement, using the partition to account for some
type information initially in the target type moving in the context,
\eg in the case of a \(\lambda\)-abstraction.
The counterpart for neutrals, however, is new:
the original proof did not use neutrals at all, and instead, went by strong
induction on the size of terms, replacing subterms by variables.
Our reasoning by mutual induction is much more direct,
and also much easier to formalise!

Note that in case 1 the roles of the source and target context
are exchanged in the typing of the interpolants compared
to the statement for normal forms.
It might make sense to push this distinction to bidirectional typing,
and have a refined neutral judgment that also infers from which
side of the context split its main variable comes. This might slightly clarify
the present proof, but since it does not drastically affect it we will not
explore it further.

\begin{proof}
As expected, the proof goes by induction on the definition of \(\Ne/\Nf\).\\
  Most \(\Nf\) cases are rather direct, we only show 
  \nameref{rule:lam-b}, cases for \nameref{rule:pair-b}, \nameref{rule:inl-b}, \nameref{rule:inr-b} are similar. \\
  %
  % \nameref{rule:star-b}: \(\tcol{T}\) is \(\top\), so we can take
  % \(\mcol{M} = \top\), and \(\scol{l} = \tcol{r} = \star\).\\
  %
  % \nameref{rule:pair-b}: \(\tcol{T}\) is \(\tcol{T_1} \times \tcol{T_2}\), \(\ccol{t}\)
  % is \(\ccol{(t_1,t_2)}\). By induction,
  % we obtain \(\scol{l_i} \in \Tm(\scol{\Gamma_{s}},\mcol{M_i})\)
  % and \(\scol{r_i} \in \Tm(\tcol{\Gamma_{t}}.\mcol{(x : M_i)},\tcol{T_i})\).
  % We can then take \(\mcol{M} \coloneq \mcol{M_1 \times M_2}\) and
  % \[\begin{array}{rlll}
  %   \scol{\Gamma_{s}} \vdash & \scol{l} \coloneq &
  %     \scol{(l_1,l_2)} & \ty \mcol{M_1 \times M_2} \\
  %   \tcol{\Gamma_{t}}.\mcol{(x : M_1 \times M_2)} \vdash &
  %     \tcol{r} \coloneq & \tcol{(\subs{r_1}{\proj{1}{y}..},\subs{r_2}{\proj{2}{y}..})} &
  %     \ty \tcol{T_1 \times T_2}
  % \end{array}\]
  %
  \nameref{rule:lam-b}: We extend the \emph{target} context \(\tcol{\Gamma_{t}}\)
  with the extra variable (this is why we need the formulation with a context partition!)
  and by induction hypothesis, we obtain
  \(\scol{l'} \in \Tm(\scol{\Gamma_{s}},\mcol{M})\) and
  \(\tcol{r'} \in \Tm(\tcol{\Gamma_{t}}.\tcol{(x : A).\mcol{(z : M)}})\),
  and we can take
  \[\begin{array}{rlll}
  \scol{\Gamma_{s}} \vdash & \scol{l} \coloneq & \scol{l'} & \ty \mcol{M} \\
  \tcol{\Gamma_{t}}.\mcol{(z : M)} \vdash & \tcol{r} \coloneq &
    \tcol{\l x.r'} &
    \ty \tcol{A \to B}
  \end{array}\]
  %
  % \nameref{rule:pair-b}, \nameref{rule:inl-b}, \nameref{rule:inr-b} are similar:
  % we post-compose \(\tcol{r}\) as obtained by recursive interpolation
  % by the relevant constructor, except we do not need to think about context extension.

  Cases involving \(\Ne\) are more interesting. For each rule, we have two subcases,
  depending on whether the induction hypothesis for the main sub-neutral is
  in case 1 or 2 above, \ie whether the main variable of the neutral
  in is the source or target context.\\
  \nameref{rule:var-b}/source: We have \((x \ty \scol{A}) \in \scol{\Gamma_s}\).
  Then we take \(\mcol{M} \coloneq \top\)
  \[\begin{array}{rlll}
  \tcol{\Gamma_{t}} \vdash & \tcol{l} \coloneq & \tcol{\star} & \ty \mcol{M} \\
  \scol{\Gamma_{s}}.\mcol{(\_ : M)} \vdash & \scol{r} \coloneq &
    \scol{x} & \ty \scol{A}
  \end{array}\]
  \nameref{rule:var-b}/target: We have \((x \ty \tcol{A}) \in \tcol{\Gamma_t}\).
  Then we take \(\mcol{M} \coloneq \top\)
  \[\begin{array}{rlll}
  \scol{\Gamma_{s}} \vdash & \scol{l} \coloneq & \scol{\star} & \ty \mcol{M} \\
  \tcol{\Gamma_{t}}.\mcol{(\_ : M)} \vdash & \tcol{r} \coloneq &
    \tcol{x} & \ty \tcol{A}
  \end{array}\]
  \nameref{rule:demote-b}/source: By induction hypothesis, we get that \(\scol{A}\)
    is a source type,
    \(\tcol{l} \in \Tm(\tcol{\Gamma_{t}},\mcol{M'})\) and
    \(\scol{r} \in \Tm(\scol{\Gamma_{s}}.\mcol{(x : M')}, \scol{A})\).
    Here we need to handle the reversal between the source and target
    context appearing in case 1.
    Luckily, we have \(\scol{A} = \tcol{B}\),
    thus \(\mcol{A}\) can only use a vocabulary common to
    \(\scol{\Gamma_s}\) and \(\tcol{\Gamma_t}\).
    Hence, \(\mcol{M} \coloneq \mcol{M' \to A}\) is a valid interpolating type,
    and we take
    \[\begin{array}{rlll}
      \scol{\Gamma_{s}} \vdash & \scol{l'} \coloneq & \scol{\l x.r} & \ty \mcol{M' \to A} \\
      \tcol{\Gamma_{t}}.\mcol{(z : M' \to A)} \vdash & \tcol{r'} \coloneq &
        \tcol{z~l} & \ty \tcol{A}
    \end{array}\]
  \\
  \nameref{rule:demote-b}/target: In this case we keep the exact
    same interpolating type and terms.\\
  \nameref{rule:raise-b}, \nameref{rule:if-b}: similarly to \nameref{rule:demote-b},
    we need to handle the reversal of source and target context.\\
  \nameref{rule:app-b}/source: We consider
  \(\icol{t}~\ccol{u}\), and the induction hypothesis on \(\icol{t}\) gives
  some \(\scol{A \to B}\) with vocabulary in \(\scol{\Gamma_s}\) and
  \(\mcol{M_t}\), \(\scol{r_t}\) and \(\tcol{l_t}\) such that
  \(\icol{t} = \subs{\scol{r_t}}{\tcol{l_t}..}\). Since \(\scol{A}\)'s
  vocabulary is in \(\scol{\Gamma_s}\), we can
  interpolate \(\ccol{u}\) with a reversed colouring,
  \ie inverting the roles of \(\scol{\Gamma_s}\) and \(\tcol{\Gamma_t}\).
  This gives \(\mcol{M_u}\), \(\scol{r_u}\) and \(\tcol{l_u}\) such that
  \(\ccol{u} = \subs{\scol{r_u}}{\tcol{l_u}..}\). We take
  \(\mcol{M} \coloneq \mcol{M_t} \times \mcol{M_u}\), and
  \[\begin{array}{rlll}
      \tcol{\Gamma_{t}} \vdash & \tcol{l} \coloneq & \tcol{(l_t,l_u)} &
        \ty \mcol{M_t \times M_u} \\
      \scol{\Gamma_{s}}.\mcol{(z : M_t \times M_u)} \vdash & \scol{r} \coloneq &
        \scol{\subs{r_t}{\proj{1}{z}..}~\subs{r_u}{\proj{2}{z}..}} & \ty \scol{B}
  \end{array}\]
  \nameref{rule:app-b}/target: This time, the induction hypothesis on \(\icol{t}\) gives
  us some \(\tcol{A \to B}\) with vocabulary in \(\tcol{\Gamma_s}\) and
  \(\mcol{M_t}\), \(\tcol{r_t}\) and \(\scol{l_t}\) such that
  \(\icol{t} = \subs{\tcol{r_t}}{\scol{l_t}..}\). We can directly use the
  induction hypothesis on \(\ccol{u}\), to get
  \(\mcol{M_u}\), \(\tcol{r_u}\) and \(\scol{l_u}\) such that
  \(\ccol{u} = \subs{\tcol{r_u}}{\scol{l_u}..}\). Then we take again
  \(\mcol{M} \coloneq \mcol{M_t} \times \mcol{M_u}\), and
  \[\begin{array}{rlll}
      \scol{\Gamma_{s}} \vdash & \scol{l} \coloneq & \scol{(l_t,l_u)} &
        \ty \mcol{M_t \times M_u} \\
      \tcol{\Gamma_{t}}.\mcol{(z : M_t \times M_u)} \vdash & \tcol{r} \coloneq &
        \tcol{\subs{r_t}{\proj{1}{z}..}~\subs{r_u}{\proj{2}{z}..}} & \ty \tcol{B}
  \end{array}\]
  \nameref{rule:proji-b}: Here in all cases we keep the same interpolating type
    and left interpolant, and simply post-compose the right interpolant with
    the relevant projection.
\end{proof}

\begin{remark}
The interpolant we construct is in fact not
the same as \Cubric. Indeed, we only build
an arrow type as the end of a ``neutral phase'' (in case 
\nameref{rule:demote-b}/source), but use a product type to
gather the type of the arguments. Instead, \Cubric will build an iterated
function type in a similar context, meaning his interpolant is a currification of
ours. For example, consider
\(\check{(\part{\Gamma_{s}}{\Gamma_{t}})}{\icol{x}~\ccol{a}~\ccol{b}}{\tcol{C}}\)
with \((x : A \to B \to C) \in \scol{\Gamma_{s}}\). The interpolants
we build are as follows (where \(\mcol{M_a}\), \(\scol{r_a}\), \(\tcol{l_a}\),
come from interpolating \(\ccol{a}\) and \(\mcol{M_b}\), \(\scol{r_b}\), \(\tcol{l_b}\)
from \(\ccol{b}\))
\[\begin{array}{rcl}
  \scol{\Gamma_{s}} \quad \vdash
    & \scol{\l y.~x~\subs{r_a}{(\proj{2}{\proj{1}{y}})..} ~ \subs{r_b}{\proj{2}{y}..}}
    & \ty \quad \mcol{(\top \times M_a \times M_b) \to C} \\
  \tcol{\Gamma_{t}}.~\mcol{z :(\top \times M_a \times M_b) \to C} \quad \vdash
    & \tcol{z~(\st,l_b,l_a)}
    & \ty \qquad \tcol{T}
  \end{array}\]
Instead, \Cubric's interpolants are (with the same notations)
\[\begin{array}{rcl}
  \scol{\Gamma_{s}} \quad \vdash
    & \scol{\l z_a~z_b.~x~\subs{r_a}{z_a..} ~ \subs{r_b}{z_b..}}
    & \ty \quad \mcol{M_a \to M_b \to C} \\
  \tcol{\Gamma_{t}}.~\mcol{z : M_a \to M_b \to C} \quad \vdash
    & \tcol{z~l_a~l_b}
    & \ty \qquad \tcol{T}
  \end{array}\]
\end{remark}

\subsection{Constants and equations}

We can generalise the above to arbitrary languages \(\lang = (\base,\const,\tyof)\)
with arbitrary equations.
As explained in \cref{sec:red-eq}, handling the extra equations is essentially
trivial. As for constants, we have to be a bit careful that
the interpolant might also use vocabulary present in the constants. The
final theorem is the following:

\begin{theorem}[\formline{Theorems.v}{54}{interpolation_eq}
    Interpolation, with constants]
  Assume \(\typing[\lang]{\Gamma}{t}[T]\), and a partition
  \(\const = \const_{s} \cup \const_{t}\) of constants.
  Define \(\const_{i}^p \coloneq \cup_{c \in \const_{i}} (\tyof c)^p\)
    the vocabulary of the sets of constants, and
    \(\lang_{i} \coloneq (\base,\const_{i},\tyof_{i})\)
    where \(\tyof_{i}\) is the obvious restriction.
  Then there exists a type \(M\) and terms
  \(\typing[\lang_{l}]{\Gamma}{l}[M]\) and \(\typing[\lang_{r}]{(x:M)}{r}[T]\) such that
  \begin{itemize}
  \item \(M\) uses only the vocabulary common to \(\Gamma\) and \(\const_{s}\)
    on one side, and \(T\) and \(\const_{t}\) on the other:
    for any \(p \in \{+,-\}\)
    \(M^p \subseteq (\Gamma^p \cup \const_{s}^p) \cap (T^p \cup \const_{t}^p)\);
  \item \(t\) is the composition of \(r\) and \(l\): \(\conv{\Gamma}{t}{\subs{r}{l..}}[T]\).
  \end{itemize}
\end{theorem}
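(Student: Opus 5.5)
The plan is to reduce to the constant-free inductive statement (\cref{thm:interpolation-ind}) by turning constants into variables. Given \(\const = \const_s \cup \const_t\), first normalise \(t\) (Normalisation corollary) to \(\ccol{t'}\) with \(\red{t}{t'}\), so by subject reduction \(\conv{\Gamma}{t}{t'}[T]\). Extra equations of the theory play no role here: we only ever use conversion covariantly, and the congruence closure with additional equations contains the one without.

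Next, abstract the constants. Let \(\Gamma_{\const_s}\) be a context with one fresh variable \(x_c : \tyof(c)\) per constant \(c \in \const_s\) occurring in \(\ccol{t'}\), and similarly \(\Gamma_{\const_t}\); only finitely many constants occur, so these are finite. Replacing each \(c\) by \(x_c\) yields a term \(\ccol{t''}\) over the constant-free language, in context \(\Gamma.\Gamma_{\const_s}.\Gamma_{\const_t}\) (up to reordering). Because \nameref{rule:cst-b} and \nameref{rule:var-b} are both neutral rules, \(\ccol{t''}\) is still a normal form; this is a straightforward induction on the derivation. We then use the partition \(\part{(\Gamma.\Gamma_{\const_s})}{\Gamma_{\const_t}}\). \Cref{thm:interpolation-ind} provides \(\mcol{M}\), \(\scol{l} \in \Tm(\Gamma.\Gamma_{\const_s},\mcol{M})\) and \(\tcol{r} \in \Tm(\Gamma_{\const_t}.(x:\mcol{M}),T)\) such that \(\conv{}{\subs{\tcol{r}}{\scol{l}..}}{\ccol{t''}}\) and \(\mcol{M}^p \subseteq (\Gamma^p \cup \const_s^p) \cap (\Gamma_{\const_t}^{-p} \cup T^p)\).

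To finish, substitute back: \(l \coloneq \scol{l}[x_c \mapsto c]\) is a term over \(\lang_s\) in \(\Gamma\), and \(r \coloneq \tcol{r}[x_c \mapsto c]\) is a term over \(\lang_t\) in \((x:M)\). Conversion is stable under substitution, and substituting constants back into \(\ccol{t''}\) returns \(\ccol{t'}\), so \(\conv{\Gamma}{t}{\subs{r}{l..}}[T]\).

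The main obstacle is the vocabulary bound. The inductive theorem gives \(\Gamma_{\const_t}^{-p}\) (reversed polarity) on the target side, whereas the statement requires \(\const_t^{p}\). So the constants of \(\const_t\) must be placed with the polarity flip accounted for. There are two options, and I would try the first before falling back on the second:
\begin{itemize}
\item abstract them as a hypothesis on the target side, \ie interpolate \(\ccol{t''}\) at the target type \(\Gamma_{\const_t}\!\to T\) (iterated arrows) via \nameref{rule:lam-b}, so that their vocabulary enters \(T^p\) with the correct polarity;
\item re-examine the \nameref{rule:var-b}/target case, where target hypotheses behave like constants and only their positive occurrences matter.
\end{itemize}

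The first option also needs a separate check that \(r\) type-checks after re-instantiating the \(\lambda\)-bound constants, which is \(\beta\)-conversion inside \(\conv{}{}{}\).
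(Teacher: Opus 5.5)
Up to your last paragraph the argument is sound. It also takes a different route from the paper, which only says that the constant-free argument generalises. Presumably the paper reruns the induction behind \cref{thm:interpolation-ind} with a \nameref{rule:cst-b} case handled like \nameref{rule:var-b}, with constants of $\const_{s}$ as source and those of $\const_{t}$ as target hypotheses. You instead abstract constants as fresh variables, which reuses the constant-free theorem as a black box at the cost of the abstraction and instantiation lemmas. What your construction actually yields, however, is
\[ M^p \subseteq (\Gamma^p \cup \const_{s}^{p}) \cap (T^p \cup \const_{t}^{-p}), \]
and the step where you try to turn $\const_{t}^{-p}$ into $\const_{t}^{p}$ fails. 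Your first option changes nothing. By \nameref{rule:lam-b}, interpolating at $\Gamma_{\const_t} \to T$ is literally the same as putting $\Gamma_{\const_t}$ in the target context, and $(A \to T)^p = A^{-p} \cup T^p$, so the constants' vocabulary enters with reversed polarity again. Your second option is not an argument.

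No argument can close this gap, because the inclusion with $\const_{t}^{p}$ is false as printed. Take a single base type $b$ and a single constant $c : b \to \bot$, with $\const_{s} = \emptyset$ and $\const_{t} = \{c\}$, and the term $x : b \vdash c\,x : \bot$. Here $\const_{t}^{+} = \emptyset$, $\const_{t}^{-} = \{b\}$, $\Gamma^{+} = \{b\}$ and $\Gamma^{-} = T^{+} = T^{-} = \emptyset$. So the printed bound forces $M$ to contain no base type at all. The value $m$ of such an $M$ in the two-element Boolean algebra then does not depend on $b$:
\begin{itemize}
\item soundness of the constant-free $x : b \vdash l : M$, with $b$ true, gives $m = 1$;
\item soundness of $y : M \vdash r : \bot$, with $b$ false (which validates $c : b \to \bot$), gives $m = 0$.
\end{itemize}
These contradict each other. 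With $\const_{t}^{-p}$ in place of $\const_{t}^{p}$, the choice $M = b$, $l = x$, $r = c\,y$ works. The correct Lyndon condition treats the problem as $\Gamma \wedge \const_{s} \vdash \const_{t} \to T$, and it is the one with $\const_{t}^{-p}$. Your reduction already proves that version, so the right move is to correct the bound rather than look for a polarity flip.
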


\section{On formalisation choices: highs and lows of automation}
\label{sec:formalisation}

As often in this sort of work, proofs are relatively straightforward
once the right organisation --~\eg induction statements~-- have been found.
We try to reproduce this using \Rocq's
automation. We found solvers for well-specified classes of problems particularly
valuable, typically for goals only
mentioned in passing on paper but which formalisation cannot paper over.
Yet, we can also testify
that, if solvers are not well polished, a lot of their value evaporates:
a slow or, worse, unreliable solver is not much better than no solver at all,
because it loses its main appeal, which is to stop thinking
about boring goals altogether.

\subparagraph*{Substitutions}
 We use \textsc{AutoSubst2-OCaml}
\cite{Stark2019,Dapprich2021} to handle the substitution calculus, in particular
the solver \texttt{asimpl} for equations between substitutions.
This saves us much boring work, particularly when reasoning about
proving that \(\red{\subs{r}{l..}}{t}\) in \cref{thm:interpolation-ind}, which
involves rather nasty terms. 
Although very useful, the tool still has a few caveats. First and foremost,
\texttt{asimpl} is actually incomplete, because it is missing the equation
\(((\wk \circ \sigma),\sigma(x_0)) = \sigma\), where \(x_0\) is the last
variable in context and \(\circ\) is composition of substitutions.
Concretely, \texttt{asimpl} does not recognise identification
that need a case-split on whether the argument is the first
variable, which often arises around context extension.
Other sources of friction include speed, with the tactic
quickly taking multiple seconds, and a perfectible setup around notations
and abstraction barriers.

The issues of speed and incompleteness are addressed by the
\textsc{Sulfur} plugin \cite{BouverotDupuis2026}, a spiritual successor of
\textsc{Autosubst}. Sadly, we were not able to use it
due to the way we parameterised all definitions by a type-class
representing a language. With \textsc{Autosubst},
that parameterisation was a matter of changing a few lines in the
generated file. Although not very principled, this was easy. In contrast,
the plugin approach does not expose the generated code,
which we thus cannot tweak. Another issue
is that \textsc{Sulfur} currently supports a single type of terms, while
we use \textsc{Autosubst} to also deal with substitutions for eliminations.

As more long-term goals, we also had frictions with lemmas not applying
because their conclusions only unify with the goal up to
substitution equations. Integrating matching modulo the substitution calculus would
thus be a very interesting extension. Additionally, an aspect of the
confluence proof we were afraid to formalise is the proliferation of relations
--~\(\beta\)-reduction, commuting conversion, parallel reduction, etc.~--
which are all congruences. Indeed, boilerplate reasoning about congruence
closures is already tedious in the current reasoning about reduction and
conversion, we were afraid to simply drown in extremely uninteresting lemmas.
It seems possible to extend \textsc{Autosubst}/\textsc{Sulfur} to
deal with such issues of congruence closure, which would remove a great amount of tedium.

% As an alternative, we should mention \textsc{Pyrosome} \cite{Jamner2025}, which
% also automates syntax-related reasoning, directly at the level 
% of typed syntax (specified as a generalised algebraic theory
% \cite{Cartmell1986,Kaposi2019a}). While this has the advantage of
% also handling some typing-related boilerplate, it does not come
% with a decision procedure, an extremely valuable part of \textsc{Autosubst}.

Finally, more specific to our setting is the need to reason about context splitting.
We do this by maintaining weakenings which relate the split contexts
to the original one, which add noise to the equational reasoning.
These issues are typical when working with linear languages, where
every typing rule has to deal with context splitting. Some general
infrastructure, based on masks, has been developed to
deal with this in formalisation \cite{Zackon2026}. It would be interesting to
explore whether it would simplify our proofs.

\subparagraph*{Solvers and setoid rewriting}

Beyond \textsc{Autosubst}, we heavily use \Rocq's proof search mechanisms.
We store all typing-related lemmas in a hint database, with which we systematically discharge typing goals. Owing to the simplicity of \stlc's typing,
this is very efficient. For instance, the proofs of progress and preservation
(\formline{MetaTheory.v}{6}{TypingRed}) for the whole language take 30 lines.
We also use \textsc{Std\(++\)}'s set theory library,
and its solver, which once again makes an entire class
of boring goals disappear. Sadly, here too we encountered performance issues.

To handle reduction, we exploit setoid rewriting to apply reduction steps or
hypotheses deep in terms. This proved quite useful, although less automated
than what \texttt{asimpl} provides. It seemed possible to also automate this
sort of goals by implementing a (reflexive) tactic that computes the normal
form of a term with respect to reduction, but the amount of work did not seem
to be worth it, so we kept to manual rewriting. An interesting difficulty appeared
for conversion: \Rocq's current setoid rewriting
machinery is not designed to work for non-reflexive relations, such as conversion,
which is only reflexive on well-typed terms. However, by adding the right
\texttt{Hint Extern} (\formline{Equations.v}{238}{Hint})
that applies our automation for typing to resolve reflexivity
goals, we managed to use setoid rewriting for conversion. This suggests it should
be more generally feasible for setoid rewriting to work well also for non-reflexive
relations, at the cost of setting up more type-class search to discharge
reflexive goals.

\subparagraph*{Confluence}

As already mentioned, we did not formalise \cref{thm:confluence}.
Although confluence is often considered simpler than normalisation,
its formal proof seemed daunting. First,
confluence of \(\beta\) and commutation of \(\beta\) and commuting
conversion, while conceptually simple, involve the definition of two new relations,
each with a congruence closure, leading to goal proliferation.
Even more frightening is local confluence of commuting conversion, where
\textsc{CSI\textsuperscript{ho}} reports 12 critical pairs. While each
is simple enough, the full argument, both the reduction of
local confluence to the pair's closures and the closure itself, seemed
extremely tedious. It would be interesting to explore what could be done here,
in the form of libraries for congruence closures or automation, maybe by
having \textsc{CSI\textsuperscript{ho}} generate proof witnesses.

\section{Related work}
\label{sec:related}

% We discuss some of the most related works, about proof-relevant interpolation, formalised interpolation results, normalisation properties with sums as well as related works on bidirectional typing. 

\subparagraph*{(Relevant) Interpolation} 
First, it is important to stress that \Cubric's work is closely related to the original proof of Craig's interpolation theorem for natural deduction by Prawitz~\cite{Prawitz1965-PRANDA}: the construction of the interpolant in Prawitz' work is actually compatible with \Cubric's result, even though this is best seen in the $\lambda$-calculus. 
\Cubric's work did not have many extensions.
Matthes' \cite{Matthes-interpolation-01} work is a notable such development, that extends \Cubric's to a $\lambda$-calculus with generalised elimination. More recently, Fiorillo~\cite{Fiorillo2024} extended \Cubric's work to Parigot's $\lambda\mu$-calculus, allowing him to get rid of the sum type to express the interpolating type. It would indeed be natural to consider extending our formalisation to such settings.

In other proof formalism, Saurin analyses (proof-relevant) interpolation~\cite{Saurin25fscd} in sequent calculi for various logics
% (classical, intuitionistic and linear sequent calculi)
as a cut-introduction mechanism. This work also presents a ``computational'' version of interpolation in System L \cite{Curien2000,MunchMaccagnoni2013},
a term language for the sequent calculus.
Veltri and Wan studied proof-relevant interpolation for a non-associative restriction of the Lambek calculus~\cite{VeltriW25} (also formalised, see below), while
Fiorillo \textit{et al.}~\cite{Fiorillo2025} considered
linear logic proof nets or non-cut-free proofs.

While Craig interpolation received both syntactic and semantic proofs, it is natural to wonder what would be a semantic counterpart to proof-relevant interpolation: this was the original focus of \Cubric who studied interpolation in bicartesian-closed categories~\cite{Cubricphd,Cubric1994}; Fiorillo~\cite{Fiorillo2024} investigated such results in models of linear logic.

\subparagraph*{Formalised Interpolation}
Craig interpolation was formalised multiple times, both for classical~\cite{Ridge2006,michaelis_et_al2017} and intuitionistic logic~\cite{ChapmanMU08}.%
\footnote{Chapman {\it et al.}\ as well as Ridge cite Boulmé~\cite{Boulme1996}
as an even earlier formalisation for classical logic in \Rocq,
but this development does not appear to be publicly available.}
More recently, Férée \textit{et. al.}
have formalised uniform interpolation results {\it à la} Pitts~\cite{Pitts92}
in \Rocq,
first for intuitionistic logic \cite{Feree2023}, then for modal logics
\cite{Feree2024}. Although related, uniform interpolation is different enough
from Craig interpolation that their proofs are quite different from ours. In particular,
they formalise cut admissibility but no normalisation, and again there
is no proof-relevant refinement. Even more recently, Veltri and Wan~\cite{VeltriW25}
used \textsc{Agda} to formalise cut-elimination and proof-relevant interpolation 
for a substructural logic, the Lambek calculus, in their effort to understand this system.
Their propositional logic only
has negative connectives, but the difficulty relies on the complexity of
the context's structure due to the lack of structural rules:
Working with a non-associative, non-commutative sequent calculus, they develop a complex machinery to handle sequents with a tree structure.
While we formalized a proof-relevant version of Prawitz's approach,
they looked at a proof-relevant version of Maehara \cite{Maehara}. 

Compared to all these, our proof (and formalisation) is by far the most
``\(\lambda\)-calculesque'', owing to its emphasis on proof-relevance.
It is also rather compact (\char`~4kLoC, including 800LoC generated by \textsc{Autosubst})
% almost 10 times smaller than Férée \textit{et al.} \cite{Feree2024}),
and fast (\char`~30s on a standard laptop,
most of which is paid to slow solvers, to compare to Veltri and Wan's development,
which takes hours to compile), so we hope some of our formalisation techniques
can inspire others.

\subparagraph*{Normalisation for sums}
Since Prawitz's work \cite{Prawitz1965-PRANDA},
normalisation/cut-elimination for sums has been the subject of much
focus. The bad behaviour of sums in natural deduction is in fact
one of the major reasons to move to sequent calculi.
Staying in the context of the \(\lambda\)-calculus (and thus natural
deduction), the closest work to ours is that of Abel and Sattler~\cite{Abel2019a}.
Their normalisation result is formalised in \textsc{Agda}, although in a
different, more intrinsic style. They merely give a
normalisation function, but leave as future work the proof that the
result of this function relates to its input, which is where a logical
relation argument such as ours is necessary. Thus, to the best of our
knowledge ours is the first formalised proof of normalisation for \stlcp.

As explained in \cref{sec:intro}, we are only concerned with a ``weak''
kind of normal forms, which have the subformula property but do
not give canonical representatives of equivalence classes for the
equational theory of biCCC. For that, one needs much more involved
constructions beyond the scope of this paper, but a good
account of which can be found in Scherer~\cite{Scherer2017}.

\subparagraph*{Exploiting bidirectional typing}
Although we ended up rediscovering some of it,
the relationship between bidirectional typing and the subformula property
is not new: it is mentioned in Dunfield and
Krishnaswami's~\cite{Dunfield2021} survey, which credits it to Pfenning,
and indeed the connection has appeared somewhat implicitly in his lectures
for a long time \cite{Pfenning1999}.
The relation between normal forms and the subformula property is also
explicitly mentioned by Abel and Sattler~\cite{Abel2019a}, although without
connection to bidirectional typing. We should also mention the work of
Lennon-Bertrand around \textsc{MetaRocq}
\cite{LennonBertrand2021,MetaCoq2025}, which implicitly exploits the subformula
property to prove strengthening for a dependent typing judgment.
But altogether, although the connection was ``in the air''
we seem to be the first to consciously exploit it to this degree.

Given that sequent calculus is generally a better setting for interpolation,
including proof-relevant results \cite{Saurin25fscd}, it seems tempting to
export our methodology. Luckily, Mihejevs and Hedges \cite{Mihejevs2025}
recently studied bidirectional typing for System L \cite{MunchMaccagnoni2013,Curien2000},
a syntax for sequent calculi, showing there is a remarkable coincidence between
the bidirectional discipline and System L's structure.
It seems enticing to explore the connections between these.

\section{Conclusion}
\label{sec:conclusion}

\paragraph*{Summary}

In the present paper, we have revisited the seminal work by \Cubric
on proof-relevant interpolation in the $\lambda$-calculus with sums (\stlcp)
\cite{Cubric1994}. Our contribution is twofold:
\begin{itemize}
	\item We revisited \Cubric' proof (which followed closely Prawitz' proof of Craig's interpolation theorem) and simplified its inductive argument by developing a bidirectional type system characterising the normal forms of \stlcp, with  $\beta$ and commuting conversions. 
	\item We benefited from this restructured proof to allow for a \Rocq formalisation.
  In doing so, we developed the meta-theory of \stlcp including normalisation.
\end{itemize} 

\paragraph*{Future work}

We envision three particularly interesting directions for future work, corresponding to strengthening our results in orthogonal direction: (i) interpolating non-normal terms, (ii) proof-relevant uniform interpolation and (iii) extending \Cubric's results to dependent types. 

\subparagraph*{Extensions with cuts}
The first evident direction for future work consists in lifting the main issue of
\Cubric interpolation: it only applies to normal forms. From a computational perspective,
this is indeed a drastic limitation: every computation can be factored through an interface type,
but only by first normalising the term, an extremely costly operation!

What if the term is not in normal form? 
Considering that interpolation in fact requires a much weaker condition than
cut-freeness, one can hope to still interpolate terms as long as their
cuts satisfy some assumptions. 
Fiorillo {\it et al.}~\cite{Fiorillo2025} have results in this direction,
extending Saurin's \cite{Saurin25fscd}
results to proofs with cuts, as long as cut formulae satisfy vocabulary constraints.
We believe this should adapt to \stlcp, and see at least two paths to prove it.

In their work, Fiorillo {\it et al.} identify a quite general condition on cut formulae
to carry out proof-relevant interpolation in a sequent calculus with cuts, and show this can be achieved for MELL (multiplicative-exponential linear logic) proof nets with cuts too (under similar assumptions). Since \stlcp can be encoded in the intuitionistic fragment of MELL using linear embeddings of intuitionistic logic, one can hope to establish the result on $\lambda$-terms with redexes via a translation to proof nets. The main difficulty to achieve such a result along these lines amounts to designing a faithful embedding of the whole of \stlcp in intuitionistic MELL that satisfies a good simulation property.

In the light of the present paper, another strategy is to amend the bidirectional type system to characterise terms whose redexes satisfy the assumptions of the interpolation theorem with cuts. It is known that, to make
bidirectional typing go beyond normal form, the route to go is
to add annotations in terms \cite{McBride2022,Gratzer2019}, with a rule like the following:
\begin{mathpar}
  \inferdef[Annot]{\check{\Gamma}{t}{A}}{\infer{\Gamma}{\annot{t}{A}}{A}}
\end{mathpar}
In such a system, any redex is mediated by an annotation, for instance a (function)
\(\beta\)-redex must be of the form \((\annot{\l x.t}{A \to B})~\ccol{u}\). Thus,
the annotation exactly makes explicit what the cut formula is! With all cut formulas
visible this way, it should be possible to impose the same sort of
constraints on vocabulary as Fiorillo \textit{et al.}

\subparagraph*{Uniform interpolation}
A very natural and exciting question is the following: can the proof-relevant interpolation result presented above be ``uniformised''? 
While the current result ensures that given types $A$ and $B$ and a normal term
$\check{x: A}{t}{B}$, there exists a type $I$ satisfying the vocabulary conditions and two terms $u,v$ \emph{depending on} $A,B$ and $\ccol{t}$, such that $\red{\subs{u}{v..}}{\ccol{t}}$, a uniform proof relevant interpolation could be stated in two ways
(where we write \(\voc(A)\) for the vocabulary of \(A\), barring polarity
concerns):
\begin{description}
\item[Uniform] For every type $A$ and $L \subseteq \base$ there exists $I$ such that $\voc(I)\subseteq L$ and for \emph{any} $B$ satisfying $\voc(A) \cap \voc(B) \subseteq L$,
and \emph{any} $\check{x:A}{t}{B}$, there are terms $u,v$ such that
  $\typing{x: A}{v}[I]$ and $\typing{y: I}{u}[B]$ and $ \red{\subs{u}{v..}}{t}$.
\item[Strongly uniform] For every type $A$ and $L \subseteq \base$ there exists $I$ such that
  $\voc(I)\subseteq L$ and $\typing{x: A}{v}[I]$ such that for any $B$ satisfying $\voc(A)\cap \voc(B) \subseteq L$ and $\typing{x:A}{t}[B]$, there is a $\typing{y: I}{u}[B]$
    such that $ \red{\subs{u}{v..}}{t}$.
\end{description}
Given that such (strong) uniform interpolation results exist in the non-relevant
case, it seems plausible that they could yield a proof-relevant version. Moreover,
our formalisation seems like a good place to start, as it clarifies the structure
of \Cubric's proof and makes it easy to play around with the interpolant, which
is explicitly computed.

\subparagraph*{Dependent types} 
While Craig interpolation is most interesting in the setting of first-order logic,
\Cubric proof-relevant interpolation only deals with a simple and,
in particular, propositional type theory. 
We plan to investigate in the future how and whether \Cubric's theorem
can be extended to various forms of type dependency. Given the intrication of
types and terms in that context, a good version of proof-relevant interpolation
is a nice place to start. Here one evident difficulty is that, contrarily
to the simply-typed case, commuting conversions are not in general well-typed
in dependent type theory. A typical example is the so-called ``convoy pattern'',
in \Rocq syntax:\\
\begin{minipage}{\linewidth}
\lstset{backgroundcolor=\color{white}}
\begin{coq}
Definition head {A n} (x : Vector.t A (S n)) : A :=
  match x in Vector.t _ m return (m = S n -> A) with
    | nil _ => fun (e : 0 = S n) => False_rect _ (O_S _ e)
    | cons _ a _ _ => fun _ => a
  end eq_refl.
\end{coq}
\end{minipage}\\
Here pushing the application inside the branch is not well-typed, because
the type of the branches, which is given by the \texttt{return} clause,
is not the same as that of the entire match: the first branch has type
\texttt{(0 = S n) -> A} and applying this function to \texttt{eq\_refl} is
ill-typed.

On the other hand, dependent type theory is very expressive, and just like in
System F one obtains trivial uniform interpolants by using the
\(\forall\)/\(\exists\) quantifiers, so in dependent type theory
too naive a statement would be trivial to satisfy thanks to
\(\P\) and \(\Sigma\) types.

\paragraph*{Large Language Model Use}
No large language model or generative artificial intelligence tool was used in
developing the formalisation or writing the article.

%%
%% Bibliography
%%

%% Please use bibtex, 

\bibliography{biblio}

\end{document}